\newtheorem{theorem}{Theorem}
\newtheorem{lemma}{Lemma}
\newtheorem{proposition}{Proposition}
\newtheorem{property}{Property}
\newcommand{\argmin}{\mathop{\rm arg\,min}\limits}
\newcommand{\hide}[1]{}
\def\squarebox#1{\hbox to #1{\hfill\vbox to #1{\vfill}}}
 \newcommand{\bs}{\bigskip} 
 \newcommand{\hs}[1]{\hspace*{ #1 mm}} 
\newcommand{\ignore}[1]{}
\begin{document}
\pagestyle{plain}
\begin{center}
{\Large {\bf 
Minmax Regret 1-Sink Location Problems on Dynamic Flow Path Networks with Parametric Weights
}}
\bs\\

{\sc Tetsuya Fujie}$^1$ \hspace{5mm} 
{\sc Yuya Higashikawa}$^1$ \hspace{5mm} 
{\sc Naoki Katoh}$^1$ \hspace{5mm} 
{\sc Junichi Teruyama}$^1$ \hspace{5mm} 
{\sc Yuki Tokuni}$^2$ \hspace{5mm} 

\

$^1${School of Social Information Science, University of Hyogo, Japan}; \\
{\tt \{fujie, higashikawa,naoki.katoh,junichi.teruyama\}@sis.u-hyogo.ac.jp}\\

$^2${School of Science and Technology, Kwansei Gakuin University, Japan}; \\
{\tt enj32048@kwansei.ac.jp}

\end{center}
\bs

\begin{abstract}
This paper addresses the minmax regret 1-sink location problem on dynamic flow path networks with parametric weights.
We are given a {\em dynamic flow network} consisting of
an undirected path with positive edge lengths, 
positive edge capacities, and nonnegative vertex weights. 
A path can be considered as a road, 
an edge length as the distance along the road and 
a vertex weight as the number of people at the site. 
An edge capacity limits the number of people that can enter the edge per unit time. 
We consider the problem of locating a \textit{sink} in the network, to which all the people evacuate from the vertices as quickly as possible.
In our model, 
each weight is represented by a linear function in a common parameter $t$,
and the decision maker who determines the location of a sink does not know the value of $t$. 
We formulate the sink location problem under such uncertainty as the \textit{minmax regret problem}. 
Given $t$ and a sink location $x$,
the cost of $x$ under $t$ is the sum of arrival times at $x$ 
for all the people determined by $t$.
The regret for $x$ under $t$ is the gap between the cost of $x$ under $t$ and the optimal cost under $t$. 
The task of the problem is formulated as the one to find a sink location that minimizes the maximum regret over all $t$. 
For the problem, we propose
an $O(n^4 2^{\alpha(n)} \alpha(n) \log n)$ time algorithm
where $n$ is the number of vertices in the network and 
$\alpha(\cdot)$ is the inverse Ackermann function. 
Also for the special case in which every edge has the same capacity,
we show that the complexity can be reduced to $O(n^3 2^{\alpha(n)} \alpha(n) \log n)$.
\end{abstract}


\section{Introduction}\label{sec:intro}

Recently, many disasters, such as earthquakes, nuclear plant accidents, volcanic eruptions and flooding,
have struck in many parts of the world,
and it has been recognized that orderly evacuation planning is urgently needed.
A powerful tool for evacuation planning is the {\em dynamic flow model} introduced by Ford and Fulkerson~\cite{ford1958},
which represents movement of commodities over time in a network.
In this model,
we are given a graph 
with \textit{source} vertices and \textit{sink} vertices.
Each source vertex is associated with a positive weight, called a \textit{supply},
each sink vertex is associated with a positive weight, called a \textit{demand}, and each edge is associated with positive length and capacity.
An edge capacity limits the amount of supply that can enter the edge per unit time.
One variant of the dynamic flow problem is the {\em quickest transshipment problem}, of which
the objective is to send exactly the right amount of supply out of sources into sinks with satisfying the demand constraints
in the minimum overall time.
Hoppe and Tardos~\cite{hoppe2000} provided a polynomial time algorithm for this problem in the
case where the transit times are integral.
However, the complexity of their algorithm is very high.
Finding a practical polynomial time solution to this problem is still open.
A reader is referred to a recent survey by Skutella~\cite{skutella2009} on dynamic flows.

This paper discusses a related problem,
called the {\em sink location problem}~\cite{belmonte2015polynomial,BenkocziBHKK18,BenkocziBHKK19,BhattacharyaGHK17,chen2016sink,chen2018minmax,higashikawa2014f,HigashikawaGK15,mamada2006},
of which 
the objective is to find a location of sinks in a given dynamic flow network so that all the supply is sent to the sinks as quickly as possible.
For the optimality of location, the following two criteria can be naturally considered: the minimization of {\em evacuation completion time} and {\em aggregate evacuation time}
(i.e., {\em sum of evacuation times}).
We call the sink location problem that requires finding a location of sinks on a dynamic flow network
that minimizes the evacuation completion time (resp. the aggregate evacuation time) 
the {\sf CTSL} problem (resp. the {\sf ATSL} problem).
Several papers have studied the {\sf CTSL} problems~\cite{belmonte2015polynomial,BhattacharyaGHK17,chen2016sink,chen2018minmax,higashikawa2014f,HigashikawaGK15,mamada2006}.
On the other hand, 
for the {\sf ATSL} problems, 
we have a few results only for path networks~\cite{BenkocziBHKK18,BenkocziBHKK19,HigashikawaGK15}.

In order to model the evacuation behavior of people,
it might be natural to treat each supply as a discrete quantity
as in~\cite{hoppe2000,mamada2006}.
Nevertheless, almost all the previous papers on sink location problems~\cite{belmonte2015polynomial,BhattacharyaGHK17,chen2016sink,chen2018minmax,higashikawa2014f,HigashikawaGK15} treat 
each supply as a continuous quantity
since 
it is easier for mathematically handling the problems and
the effect of such treatment is small enough to ignore
when the number of people is large.
Throughout the paper, we adopt the model with continuous supplies.

Although the above two criteria are reasonable, they may not be practical
since the population distribution is assumed to be fixed.
In a real situation, the number of people in an area may vary depending on the time,
e.g., in an office area in a big city, there are many people during the daytime on weekdays while there are much less people on weekends or during the night time. 
In order to take such the uncertainty into account, 
Kouvelis and Yu~\cite{kouvelis1997} introduced the {\em minmax regret model}.
In the {\em minmax regret sink location problems}, 
we are given a finite or infinite set $S$ of {\em scenarios},
where each scenario gives a particular assignment of weights on all the vertices.
Here, for a sink location $x$ and a scenario $s \in S$, 
we denote the evacuation completion time or aggregate evacuation time by $F(x,s)$. 
Then, the problem can be understood as a 2-person Stackelberg game as follows. 
The first player picks a sink location $x$ and the second player chooses a scenario $s \in S$
that maximizes the regret defined as $R(x,s) := F(x,s)-\min_x F(x,s)$.
The objective of the first player is to choose $x$ that minimizes the maximum regret.
Throughout the paper, we call the minmax regret sink location problem, where the regret is defined with the evacuation completion time (resp. the aggregate evacuation time), 
the {\sf MMR-CTSL} problem (resp. the {\sf MMR-ATSL} problem).
The {\sf MMR-CTSL} problems have been studied so far~\cite{arumugam2019minmax,bhattacharya2015improved,golin2018minmax,higashikawa2015a,higashikawa2014f,li2016a,li2016b}. 
On the other hand, 
for the {\sf MMR-ATSL} problems, 
we have few results~\cite{bhattacharya2018n,higashikawa2018minimax}
although the problems are also important theoretically and practically.

As for how to define a set of scenarios, 
all of the previous studies on the minmax regret sink location problems adopt the model with {\em interval weights},
in which each vertex is given the weight as a real interval,
and a scenario is defined by choosing an element of the Cartesian product of all the weight intervals over the vertices.
One drawback of the minmax regret model with interval weights is that 
each weight can take an independent value, thus we consider some extreme scenarios which may not happen in real situations, e.g, a scenario where all the vertices have maximum weights or minimum weights.
To incorporate the dependency among weights of all the vertices into account, we adopt the model with {\em parametric weights} (first introduced by Vairaktarakis and Kouvelis~\cite{vairaktarakis1999incorporation} for the minmax regret median problem),
in which each vertex is given the weight as a linear function in a common parameter $t$ on a real interval,
and a scenario is just determined by choosing $t$.
Note that considering a real situation,
each weight function should be more complex,
however, such a function can be approximated by a piecewise linear function.
Thus superimposing all such piecewise linear functions, it turns out that for a sufficiently small subinterval of $t$, every weight function can be regarded as linear, and by solving multiple subproblems with linear weight functions, we can obtain the solution.

In this paper, 
we study the {\sf MMR-ATSL} problem on dynamic flow path networks with parametric weights.
Our main theorem is below.

\begin{theorem}[Main Results]\label{thm:main}
Suppose that we are given a dynamic flow path network of $n$ vertices with parametric weights.
\begin{itemize}
\item[(i)] The {\sf MMR-ATSL} problem can be solved 
in time $O( n^4 2^{\alpha(n)} \alpha(n) \log n)$, 
where $\alpha(\cdot)$ is the inverse Ackermann function.

\item[(ii)] When all the edge capacities are uniform, 
the {\sf MMR-ATSL} problem can be solved 
in time $O( n^3 2^{\alpha(n)} \alpha(n) \log n)$.
\end{itemize}
\end{theorem}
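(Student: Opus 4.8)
The plan is to treat the two variables asymmetrically: the sink position $x$ ranges over the path while the scenario parameter $t$ ranges over a fixed interval $[t_L,t_R]$. Write $\Phi(x,t)$ for the aggregate evacuation time of sink $x$ under scenario $t$, so that $\opt(t)=\min_x\Phi(x,t)$ and the regret is $R(x,t)=\Phi(x,t)-\opt(t)$; the task is $\min_x\max_t R(x,t)$. The first thing I would establish is the analytic shape of $\Phi(x,\cdot)$ for a \emph{fixed} sink. Splitting the supply into the part left of $x$ (moving right) and the part right of $x$ (moving left), the aggregate arrival time contributed by a single maximal congested group is a quadratic in the weights inside it: for one vertex of weight $w$ behind a bottleneck of capacity $c$ at travel time $\tau$ it equals exactly $w\tau+w^2/(2c)$. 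Summing such contributions over the maximal congested groups (``clusters'') shows that, as long as the clustering pattern is frozen, $t\mapsto\Phi(x,t)$ is a fixed quadratic (each weight being linear in $t$). Hence $\Phi(x,\cdot)$ is piecewise quadratic, with breakpoints exactly where two adjacent clusters merge or split; each such event is governed by a linear condition in $t$, so the number of breakpoints is polynomial in $n$.

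Next I would compute $\opt(t)$ explicitly as a piecewise-quadratic function on $[t_L,t_R]$. After arguing, by the usual exchange argument for aggregate cost on a path, that an optimal sink may be restricted to $O(n)$ candidate positions, $\opt(t)$ is the lower envelope of $O(n)$ piecewise-quadratic functions $\Phi(x,\cdot)$. Two quadratic arcs meet at most twice, so by the theory of Davenport--Schinzel sequences the lower envelope of $N$ such partially-defined arcs has complexity $O(\lambda_4(N))=O(N\,2^{\alpha(N)})$ and can be built by divide-and-conquer in an extra $O(\log n)$ factor; this is the step that injects the $2^{\alpha(n)}$ into the running time.

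With $\opt(t)$ in hand, $R(x,t)=\Phi(x,t)-\opt(t)$ is, for each fixed $x$, a difference of two piecewise quadratics and hence again piecewise quadratic in $t$; its maximum over $t$ is attained at a breakpoint (of $\Phi(x,\cdot)$ or of $\opt(\cdot)$) or at the apex of one quadratic piece, so $\max_t R(x,t)$ is found by scanning these $O(\mathrm{poly}(n))$ candidate scenarios. To finish, I would minimise $\max_t R(x,t)$ over the sink by sweeping $x$ along the path and tracking how the cluster structure of $\Phi(\cdot,t)$ and the identity of the maximising scenario change. Substituting the worst-case scenario back into the cost produces worst-case-regret arcs that are no longer quadratic and can pairwise cross up to three times, so the upper envelope realising $\min_x\max_t R$ is an order-$5$ Davenport--Schinzel structure of complexity $O(\lambda_5(\cdot))=O(\,\cdot\,\alpha(n)\,2^{\alpha(n)})$; the accompanying searches and merges contribute the $\log n$. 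Multiplying the $O(\mathrm{poly}(n))$ candidate sinks by the per-sink envelope cost then yields the claimed $O(n^4\,2^{\alpha(n)}\alpha(n)\log n)$ bound.

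The main obstacle is the combinatorial control of the clustering under \emph{non-uniform} capacities: when capacities differ, the bottleneck edge seen by a group depends on its extent, so merges and splits can cascade, and one must bound the total number of clustering changes (hence of arcs) that enter all the envelopes — this is precisely where an extra factor of $n$ is spent. When all capacities are equal, the bottleneck of every group is simply its nearest edge to the sink, the cluster boundaries admit a clean monotone description, and the number of pieces of each $\Phi(x,\cdot)$ drops by a factor of $n$; propagating this saving through the envelope computations gives the improved $O(n^3\,2^{\alpha(n)}\alpha(n)\log n)$ bound of part (ii). Making the clustering count and the order-$4$/order-$5$ Davenport--Schinzel estimates rigorous, rather than the elementary cost formula itself, is where the real work lies.
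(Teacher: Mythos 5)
Your high-level strategy coincides with the paper's: partition $T$ into intervals on which the cost is a single quadratic, build ${\rm Opt}(t)$ as a lower envelope of the $n$ vertex-cost functions via Davenport--Schinzel bounds, take per-interval maxima of the regret, and finish with an upper envelope minimized edge by edge. However, your proposal has a genuine gap at the quantitative heart of the result. The exponents $n^4$ and $n^3$ in the theorem come from exactly one place: a bound on the number of $t$-intervals on which the $t$-dependent part of $\Phi(\cdot,t)$ restricted to one edge is a single quadratic (the functions $F^{e}_{\rm L}(t)$, $F^{e}_{\rm R}(t)$ of the paper). You assert only that this count is ``polynomial in $n$,'' and you yourself concede that making the clustering count rigorous ``is where the real work lies.'' That deferred step is not a detail; it is the main lemma. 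The paper (Lemma~\ref{lem:func_Phi_merged}) proves the count is $O(n^2)$ per edge in general and $O(n)$ under uniform capacities, by showing that every breakpoint of $F^{e_i}_{\rm L}$ is a breakpoint of one of the functions $z_j(t)=\max_{j'}z_{j,j'}(t)$, each an upper envelope of $O(n)$ partially defined \emph{linear} functions of $t$ (hence $O(n)$ breakpoints each, and at most two when capacities are uniform). A cruder event count (e.g., ``three curves meet'') gives only $O(n^3)$, and with a bare ``polynomial'' bound neither claimed running time follows. Your uniform-capacity saving of ``a factor of $n$'' is likewise asserted, not derived.

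There is a second, related flaw in your minimization over $x$. For $x$ on a fixed edge $e_i$, the term $\tau(x-v_h)$ is common to every candidate in the max defining the completion time, so the envelope sequence in $z$ --- and hence the set of $t$-breakpoints --- does not depend on $x$; this yields $\Phi(x,t)=\bigl(W_{1,i}(t)-W_{i+1,n}(t)\bigr)\tau x+F^{e_i}_{\rm L}(t)+F^{e_i}_{\rm R}(t)$, so on each piece $T_j$ the regret is one bivariate quadratic $\beta_1t^2+\beta_2xt+\beta_3t+\beta_4x+\beta_5$. Maximizing over $t\in T_j$ then gives a function of $x$ that is piecewise quadratic with at most three pieces (the endpoints of $T_j$ give linear functions; substituting the apex $t=-(\beta_2x+\beta_3)/(2\beta_1)$, which is linear in $x$, gives a quadratic). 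Consequently $MR(x)$ on an edge is an upper envelope of quadratic arcs: its complexity is the order-4 bound $O(N2^{\alpha(N)})$, and the $\alpha(n)\log n$ factor in the theorem comes from the envelope-construction algorithm (Theorem~\ref{thm:envelope_theorem}), not from $\lambda_5$. Your claim that the worst-case-regret arcs are ``no longer quadratic and can pairwise cross up to three times,'' forcing an order-5 Davenport--Schinzel structure, is incorrect in the proper decomposition and is an artifact of missing the linear-in-$x$ factorization; moreover, your proposed sweep of $x$ with cluster tracking would require its own (unproven) bound on how often the maximizing scenario changes, which your argument does not supply.
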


Note that 
the {\sf MMR-ATSL} problem with interval weights
is studied by~\cite{bhattacharya2018n,higashikawa2018minimax},
and only for the case with the uniform edge capacity,
Higashikawa~et~al.~\cite{higashikawa2018minimax} provide an $O(n^3)$ time algorithm,
which is improved to one running in $O(n^2 \log^2 n)$ time by~\cite{bhattacharya2018n}.
However, for the case with general edge capacities,
no algorithm has been known so far.
Therefore, our result implies that 
the problem becomes 
solvable in polynomial time
by introducing parametric weights.

The rest of the paper is organized as follows.
In Section~\ref{sec:preliminary}, we give the notations and the fundamental properties that are used throughout the paper.
In Section~\ref{sec:keyidea}, we give the key lemmas and the algorithms that solves the problems, which concludes the paper.

\section{Preliminaries}\label{sec:preliminary}
For two real values $a, b$ with $a<b$, let $[a, b] = \{t \in {\mathbb R} \mid a  \leq t \leq b\}$,
$(a, b) = \{t \in {\mathbb R} \mid a  < t < b\}$,
and 
$(a, b] = \{t \in {\mathbb R} \mid a  < t \leq b\}$.

In our problem, we are given a real interval $T = [t^{-}, t^{+}] \subset {\mathbb R}$ and 
a dynamic flow path network ${\cal P} = (P, {\bf w}(t), {\bf c}, {\bf l}, \tau)$, 
which consists of five elements: 
$P = (V, E)$ is a path with vertex set $V = \{v_i \mid 1 \leq i \leq n\}$ and 
edge set $E = \{e_i = (v_i, v_{i+1}) \mid 1 \leq i \leq n-1\}$, 
${\bf w}(t)$ is a vector $\langle w_1(t), \ldots, w_n(t) \rangle$ of which 
component $w_i(t)$ is a {\it weight function} $w_i : T \rightarrow {\mathbb R}_{\geq 0}$ 
which is linear in a parameter $t$ and nonnegative for any $t \in T$, 
a vector ${\bf c} = \langle c_1, \ldots, c_{n-1} \rangle$ consists of the {\it capacity} $c_i$ of edge $e_i$, 
a vector ${\bf l} = \langle \ell_1, \ldots, \ell_{n-1} \rangle$ consists of the {\it length} $\ell_i$ of edge $e_i$, 
and $\tau$ is the {\it time} which the supply takes to move a unit distance on any edge. 
Let us explain how edge capacities and lengths affect the evacuation time.
Consider an evacuation under fixed $t \in T$.
Suppose that at time $0$, the amount $w$ of supply is at vertex $v_{i+1}$ and going through edge $e_i$ towards vertex $v_i$.
The first fraction of supply from $v_{i+1}$ can arrive at $v_i$ at time $\tau \ell_i$.
The edge capacity $c_i$ represents the maximum amount of supply which can enter $e_i$ in a unit time interval, 
so all the supply $w$ can complete leaving $v_{i+1}$ at time $w/c_i$.
Therefore, all the supply $w$ can complete arriving at $v_{i}$ at time $\tau \ell_i+w/c_i$.

For any integers $i,j$ with $1 \leq i \le j \leq n$, we denote the sum of weights from $v_i$ to $v_j$ by $W_{i,j}(t) = \sum_{h=i}^{j} w_h(t)$. 
For the notation, we define $W_{i,j}(t) = 0$ for $i,j$ with $i > j$. 
For a vertex $v_i \in V$, we abuse $v_i$ to denote the distance between $v_1$ and $v_i$,
i.e., $v_i = \sum_{j=1}^{i-1} \ell_j$.
For an edge $e_i \in E$, we abuse $e_i$ to denote a real open interval $(v_i,v_{i+1})$.
We also abuse $P$ to denote a real closed interval $[0,v_n]$.
If a real value $x$ satisfies $x \in (v_i, v_{i+1})$, 
$x$ is said to be a point on edge $e_i$
to which the distance from $v_i$ is value $x-v_i$.
Let $C_{i,j}$ be the minimum capacity for all the edges from $e_i$ to $e_j$, i.e., $C_{i,j} = \min\{c_{h} \mid i \leq h \leq j \}$.

Note that we precompute values $v_i$ and $W_{1,i}(t)$ for all $i$ in $O(n)$ time,
and then, $W_{i,j}(t)$ for any $i,j$ can be obtained in $O(1)$ time as $W_{i,j}(t)=W_{1,j}(t)-W_{1,i-1}(t)$.
In addition, $C_{i,j}$ for any $i,j$ can be obtained in $O(1)$ time 
with $O(n)$ preprocessing time, which is known as the {\it range minimum query}~\cite{Alstrup2002,Bender2000}.

\subsection{Evacuation Completion Time on a Dynamic Flow Path Network}

In this section, we see the details of evacuation phenomenon using a simple example,
and eventually show the general formula of evacuation completion time on a path, 
first provided by Higashikawa~\cite{Higashikawa14}.
W.l.o.g., an evacuation to a sink $x$ follows the first-come first-served manner at each vertex,
i.e., when a small fraction of supply arrives at a vertex $v$ on its way to $x$, 
it has to wait for the departure if there already remains some supply waiting for leaving $v$.

Let us consider an example with $|V|=3$ where $V=\{v_1, v_2, v_3\}, E=\{e_1=(v_1, v_2), e_2=(v_2,v_3)\}$.
Assume that the sink $x$ is located at $v_1$, and
under a fixed parameter $t \in T$,
the amount of supply at $v_i$ is $w_i$ for $i=2,3$.

All the supply $w_1$ at $v_1$ immediately 
completes its evacuation at time $0$
and we send all the supply $w_2$ and $w_3$ to $v_1$ as quickly as possible. 
Let us focus on how the supply of $v_3$ moves to $v_1$.
First, the foremost fraction of supply from $v_3$ arrives at $v_2$ at time $\tau \ell_2$,
and all the supply $w_3$ completes leaving $v_3$ at time $w_3/c_2$,
i.e., it completes arriving at $v_2$ at time $\tau \ell_2 + w_3/c_2$.
Suppose that at time $\tau \ell_2 + w_3/c_2$, the amount $w' (\ge 0)$ of supply remains at $v_2$.
From then on, 
the time required to send all the supply $w'$ to $v_1$ is $\tau \ell_1 + w'/c_2$.
Thus, the evacuation completion time is expressed as
\begin{equation}\label{eq:example1}
\tau (\ell_1+\ell_2) + \frac{w_3}{ c_2 } + \frac{w'}{c_1}.
\end{equation}
We observe what value $w'$ takes in the following cases. 

{\bf Case 1: It holds \mbox{\boldmath $c_1 \ge c_2$}.}
In this case, the amount of supply at $v_2$ should be non-increasing, 
because the amount $c_1$ of supply leaves $v_2$
and the amount at most $c_2$ of supply arrives at $v_2$ per unit time.
Let us consider the following two situations at time $\tau \ell_2 + w_3/c_2$: 
When all the supply $w_3$ completes arriving at $v_2$, 
there remains no supply at $v_2$, that is, $w' = 0$ holds or not. 
If $w'=0$ holds, then substituting it into~\eqref{eq:example1}, 
the evacuation completion time is expressed as
\begin{equation}\label{eq:example2}
\tau (\ell_1+\ell_2) + \frac{w_3}{ c_2 }.
\end{equation}
Otherwise, that is $w' > 0$ holds, 
there remains a certain amount of supply at $v_2$ even at time $\tau \ell_2$ 
since the amount of supply at $v_2$ is non-increasing. 
Thus at time $\tau \ell_2$, 
the amount $w_2-c_1 \tau \ell_2$ of supply remains at $v_2$.
From time $\tau \ell_2$ to time $\tau \ell_2 + w_3/c_2$, 
the amount of supply waiting at $v_2$ decreases by $c_1-c_2$ per unit time.
Then, we have
\[
w' 
= w_2 - c_1(\tau \ell_2) - (c_1-c_2) \cdot \frac{w_3}{c_2}
= w_2 + w_3 - c_1 \tau \ell_2 - \frac{c_1 w_3}{c_2}.
\]
Thus, the evacuation completion time is expressed as
\begin{equation}\label{eq:example3}
\tau (\ell_1 + \ell_2) + \frac{w_3}{c_2} + \frac{w_2 + w_3 - c_1\tau \ell_2 - c_1 w_3 / c_2}{c_1} = 
\tau \ell_1 + \frac{w_2 + w_3}{ c_1 }.
\end{equation}

{\bf Case 2: It holds \mbox{\boldmath $c_1 < c_2$}.}
In this case, the amount of supply waiting at $v_2$ increases by $c_2-c_1$ per unit time
from time $\tau \ell_2$ 
(when the foremost supply from $v_3$ arrives at $v_2$)
to time $\tau \ell_2 + w_3/c_2$
(when the supply from $v_3$ completes to arrive at $v_2$).
Let us consider the following two situations at time $\tau \ell_2$.
When the foremost supply from $v_3$ arrives at $v_2$, 
there remains no supply at $v_2$ or not. 

If there remains no supply at $v_2$ at time $\tau \ell_2$, 
then it holds $w'=(c_2-c_1)(w_3/c_2) = w_3 - c_1 w_3/c_2$ in~\eqref{eq:example1}. 
Thus, the evacuation completion time is expressed as
\begin{equation}\label{eq:example4}
\tau (\ell_1 + \ell_2) + \frac{w_3}{c_2} + 
\frac{w_3 - c_1 w_3/c_2}{c_1} = 
\tau (\ell_1+\ell_2) + \frac{w_3}{ c_1 }.
\end{equation}
Otherwise, the situation is similar to the latter case of Case~1. 
The difference is that the amount of supply waiting at $v_2$ increases by $c_2-c_1$ per unit time during from time $\tau \ell_2$ to time $\tau \ell_2 + w_3/c_2$,
while in Case 1, it decreases by $c_1-c_2$ per unit time. 
For this case, the evacuation completion time is given by formula~\eqref{eq:example3}. 

In summary of formulae~\eqref{eq:example2}--\eqref{eq:example4}, the evacuation completion time 
for a dynamic flow path network with three vertices 
is given by the following formula:
\begin{equation}\label{eq:formula_3}
\max\left\{
\tau \ell_1 + \frac{w_2+w_3}{c_1}, 
\tau (\ell_1+\ell_2) + \frac{w_3}{\min\{c_1, c_2\}}\right\}.
\end{equation}

Let us turn to the case with $n$ vertices, 
that is, $V = \{v_i \mid 1 \leq i \leq n\}$.
When the sink is located at $v_1$ and a parameter $t \in T$ is fixed, 
generalizing formula~\eqref{eq:formula_3}, 
the evacuation completion time 
is given by the following formula, which is provided by Higashikawa~\cite{Higashikawa14}:
\begin{eqnarray}\label{eq:formula_n}
\max_{2 \leq i \leq n}
\left\{
\tau \sum_{j=1}^{i-1} \ell_{j} + 
\frac{ \sum_{j=i}^{n}w_j(t)}{ \min_{1 \le j \le i-1} c_j}
\right\}
= 
\max_{2 \leq i \leq n}
\left\{
\tau v_{i} + \frac{W_{i,n}(t)}{C_{1,i}}
\right\}.
\end{eqnarray}
An interesting observation is that 
each $\tau v_{i} + W_{i,n}(t)/C_{1,i}$ in~\eqref{eq:formula_n}
is equivalent to the evacuation completion time 
for the transformed input so that only $v_i$ is given supply $W_{i,n}(t)$
and all the others are given zero supply. 

Let us give explicit formula of the evacuation completion time 
for fixed $x \in P$ and parameter $t \in T$.
Suppose that a sink $x$ is on edge $e_i = (v_{i}, v_{i+1})$. 
In this case, all the supply on the right side (i.e., at $v_{i+1}, \ldots, v_n$) will flow left to sink $x$ and 
all the supply on the left side (i.e., at $v_{1}, \ldots, v_{i}$) will flow right to sink $x$. 
First, we consider the evacuation for the supply on the right side of $x$. 
Supply on the path is viewed as a continuous value, and we regard that all the supply on 
the right side of $x$ is mapped to the interval $(0, W_{i+1,n}(t)]$. 
The value $z$ satisfying $z \in (W_{i+1,j-1}(t), W_{i+1,j}(t)]$ with $i+1 \le j\le n$ 
represents all the supply at vertices 
$v_{i+1}, v_{i+2}, \ldots, v_{j-1}$ plus partial supply of 
$z - W_{i+1,j-1}(t)$ at $v_j$. 
Let $\theta^{e_i}_{R} (x, t, z)$ denote
the time at which the first $z$ amount of supply on the right side of $x$
(i.e., $v_{i+1}, v_{i+2}, \ldots, v_n$) completes its evacuation to sink $x$. 
Modifying formula~\eqref{eq:formula_n}, 
$\theta^{e_i}_{R} (x, t, z)$ is given by the following formula:
For $z \in (W_{i+1,j-1}(t), W_{i+1,j}(t)]$ with $i+1 \le j\le n$,
\begin{eqnarray}\label{eq:ct_right_1}
\theta^{e_i}_{{\rm R}} (x, t, z) & = & 
\max_{i+1 \le h \le j}
\left\{
\tau (v_h - x) + \frac{z - W_{i+1,h-1}(t)}{C_{i, h}} \right\}.
\end{eqnarray}
In a symmetric manner, we consider the evacuation for the supply on the left side of $x$ (i.e., $v_1, \ldots, v_i$). 
The value $z$ satisfying $z \in (W_{j+1,i}(t),W_{j,i}(t)]$ with $1 \le j\le i$ represents 
all the supply at vertices $v_i, v_{i-1}, \ldots, v_{j+1}$ plus partial supply of $z - W_{j+1,i}(t)$ at $v_j$. 
Let $\theta^{e_i}_{L} (x, t, z)$ denote 
the time at which the first $z$ amount of supply on the left side of $x$ 
completes its evacuation to sink $x$, 
which is given by the following formula: 
For $z \in (W_{j+1,i}(t),W_{j,i}(t)]$ with $1 \le j\le i$,
\begin{eqnarray}\label{eq:ct_left}
\theta^{e_i}_{{\rm L}} (x, t, z) & = & 
\max_{j \le h \le i}
\left\{  \tau (x - v_h) 
+ \frac{z - W_{h+1,i}(t)}{C_{h,i}}
\right\}.
\end{eqnarray}
Let us turn to the case that sink $x$ is at a vertex $v_i \in V$.
We confirm that the evacuation times when the amount $z$ of supply originating 
from the right side of and the left side of $v_i$ to sink $v_i$
are given by $\theta^{e_{i}}_{{\rm R}} (v_i, t, z)$ and 
$\theta^{e_{i-1}}_{{\rm L}}(v_i, t, z)$, respectively.

\subsection{Aggregate Evacuation Time}
Let $\Phi(x, t)$ be the aggregate evacuation time (i.e., sum of evacuation time) 
when a sink is at a point $x \in P$ and the weight functions are fixed by a parameter $t \in T$. 
For a point $x$ on edge $e_i$ and a parameter $t \in T$, 
the aggregate evacuation time $\Phi(x,t)$ is defined by 
the integrals of the evacuation completion times 
$\theta^{e_i}_{{\rm L}} (x, t, z)$ over $z \in [0,W_{1,i}(t)]$
and $\theta^{e_i}_{{\rm R}} (x, t, z)$ over $z \in [0,W_{i+1,n}(t)]$,
i.e., 
\begin{eqnarray}\label{eq:at}
\Phi(x, t) = \int_{0}^{W_{1,i}(t)} \theta^{e_i}_{{\rm L}} (x, t, z) dz + \int_{0}^{W_{i+1,n}(t)} \theta^{e_i}_{{\rm R}} (x, t, z) dz.
\end{eqnarray}
In a similar way, if a sink $x$ is at vertex $v_i$, then 
$\Phi(v_i, t)$ is given by 
\begin{eqnarray}\label{eq:at_v}
\Phi(v_i, t) = \int_{0}^{W_{1, i-1}(t)} \theta^{e_{i-1}}_{{\rm L}} (v_i, t, z) dz + \int_{0}^{W_{i+1,n}(t)} \theta^{e_i}_{{\rm R}} (v_i, t, z) dz.
\end{eqnarray}

\subsection{Minmax Regret Formulation}\label{subsec:minmax_regret}

We denote by ${\rm Opt}(t)$ the minimum aggregate evacuation time with respect to a parameter $t \in T$. 
Higashikawa et al.~\cite{HigashikawaGK15} and Benkoczi et al.~\cite{BenkocziBHKK19}
showed that for the minsum $k$-sink location problems, 
there exists an optimal $k$-sink such that all the $k$ sinks are at vertices. 
This implies that we have
\begin{eqnarray}\label{eq:opt} 
{\rm Opt}(t) = \min_{x \in V} \Phi(x, t)
\end{eqnarray} 
for any $t \in T$. 
For a point $x \in P$ and a value $t \in T$, 
a {\it regret} $R(x,t)$ with regard to $x$ and $t$ is 
a gap between $\Phi(x,t)$ and ${\rm Opt}(t)$
that is defined as
\begin{eqnarray}\label{def:regret}
R(x, t) = \Phi(x,t) - {\rm Opt}(t).
\end{eqnarray}
The {\it maximum regret} for a sink $x \in P$, 
denoted by $MR(x)$, 
is the maximum value of $R(x,t)$ with respect to $t \in T$. Thus, $MR(x)$ is defined as
\begin{eqnarray}\label{def:maxregret}
MR(x) = \max_{t \in T} R(x,t).
\end{eqnarray}
Given a dynamic flow path network  ${\cal P}$ and a real interval $T$, the problem 1-MMR-AT-PW is defined as follows:
\begin{equation}\label{problem}
\mbox{ minimize } MR(x) \mbox{ subject to  } x\in P
\end{equation}
Let $x^*$ denote an optimal solution of~\eqref{problem}.

\subsection{Piecewise Functions and Upper/Lower Envelopes}\label{subsec:envelopes}

A function $f: X (\subset {\mathbb R}) \rightarrow {\mathbb R}$ is called a {\em piecewise polynomial function}
if and only if real interval $X$ can be partitioned into subintervals $X_1, X_2, \ldots, X_m$ 
so that $f$ forms as a polynomial $f_i$ on each $X_i$. 
We denote such a piecewise polynomial function $f$ 
by $f = \langle (f_1, X_1), \ldots, (f_m, X_m) \rangle$, 
or simply $f = \langle (f_i, X_i) \rangle$. 
We assume that such partition into subintervals are maximal in a sense that for any $i$ and $i+1$ $f_i\ne f_{i+1}$. 
We call each pair $(f_i, X_i)$ a {\it piece} of $f$, 
and an endpoint of the closure of $X_i$ a {\it breakpoint} of $f$. 
A piecewise polynomial function $f = \langle (f_i, X_i) \rangle$ 
is called a {\it piecewise polynomial function of degree at most two}
if and only if each $f_i$ is quadratic or linear. 
We confirm the following property about the sum of piecewise polynomial functions. 
\begin{proposition}\label{pro:sum_piecewise}
Let $m$ and $m'$ be positive integers, and 
$f, g: X (\subset {\mathbb R}) \rightarrow {\mathbb R}$ 
be piecewise polynomial functions of degree at most two with $m$ and $m'$ pieces, respectively. 
Then, a function $h = f + g$ is a piecewise polynomial function of degree at most two 
with at most $m + m'$ pieces. 
Moreover, given $f = \langle (f_i, X_i) \rangle$ and 
$g = \langle (g_j, X'_j) \rangle$, 
we can obtain $h = f + g = \langle (h_j, X''_j) \rangle$ in $O( m + m' )$ time.
\end{proposition}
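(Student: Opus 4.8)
The plan is to reduce the claim to the standard merge operation of two sorted breakpoint sequences, exactly as in the merge step of mergesort, together with an elementary counting argument.

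First I would set up the counting. Since $f = \langle (f_i, X_i) \rangle$ has $m$ pieces, its subintervals $X_1, \ldots, X_m$ partition the common domain $X$ and are separated by $m-1$ internal breakpoints; likewise $g$ contributes $m'-1$ internal breakpoints. Let $B$ be the union of these two breakpoint sets together with the two endpoints of $X$, so that $|B| \le (m-1) + (m'-1) + 2 = m + m'$, and the points of $B$ partition $X$ into at most $m + m' - 1$ subintervals $X''_1, \ldots, X''_k$. On each $X''_j$ neither $f$ nor $g$ has a breakpoint in its interior, hence $f$ agrees there with a single piece $f_i$ and $g$ with a single piece $g_{j'}$; therefore $h = f+g$ equals $f_i + g_{j'}$ on $X''_j$. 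Because the sum of two polynomials of degree at most two is again a polynomial of degree at most two, each such piece is quadratic or linear, and $h$ is a piecewise polynomial function of degree at most two with at most $m + m' - 1 \le m + m'$ pieces. If two consecutive pieces happen to carry the same polynomial, I would coalesce them to keep the representation maximal, which only decreases the count.

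Second, I would describe the $O(m+m')$-time construction as a two-pointer merge. Maintaining indices $i$ and $j$ into the piece lists of $f$ and $g$, at each step the current pieces $X_i$ and $X'_j$ overlap in a subinterval whose left endpoint is already known; I compute the sum polynomial $f_i + g_j$ there in $O(1)$ time (each polynomial is described by at most three coefficients), emit it as the next piece of $h$, and advance whichever of $i,j$ indexes the piece that ends first, advancing both when they end simultaneously. Each advance increases $i+j$ by at least one, so there are at most $m + m'$ iterations of cost $O(1)$, and the optional coalescing pass is also linear, giving the stated $O(m+m')$ bound.

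The work here is entirely bookkeeping, so I do not expect a genuine mathematical obstacle. The only points needing care are getting the piece count right, where the sharp value $m + m' - 1$ sits comfortably inside the stated $m + m'$, and handling breakpoints that $f$ and $g$ share without double counting, which is precisely the standard tie resolution in a mergesort merge.
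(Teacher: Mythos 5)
Your proof is correct. The paper actually states this proposition without any proof at all (it is introduced with ``We confirm the following property'' and treated as a standard fact), so there is no argument to diverge from; your merge construction --- taking the union of the two breakpoint sets, observing that $f$ and $g$ are each a single polynomial of degree at most two on every resulting subinterval, and building the sum with a two-pointer sweep --- is exactly the standard argument the paper implicitly relies on. Your counting is sound (at most $(m-1)+(m'-1)$ internal breakpoints, hence at most $m+m'-1 \le m+m'$ pieces, with coalescing of equal adjacent polynomials only decreasing the count), and the $O(m+m')$ running time follows since each iteration advances at least one of the two indices.
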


Let ${\cal F} = \{f_1(y), \ldots, f_m(y)\}$ be a family of $m$ polynomial functions 
where $f_i: Y_i(\subset \mathbb{R}) \rightarrow \mathbb{R}$
and $Y$ denote the union of $Y_i$, that is, $Y = \cup_{i=1}^{m}Y_i$.
An {\it upper envelope} ${\cal U}_{{\cal F}}(y)$ and a {\it lower envelope} ${\cal L}_{{\cal F}}(y)$ of ${\cal F}$ 
are functions from $Y$ to $\mathbb{R}$ defined as follows: 
\begin{eqnarray}\label{eq:envelope}
{\cal U}_{{\cal F}}(y) = \max_{i = 1, \ldots, m} f_i(y), \:\:
{\cal L}_{{\cal F}}(y) = \min_{i = 1, \ldots, m} f_i(y),
\end{eqnarray}
where the maximum and the minimum are taken over those functions that are defined at $y$, respectively.
For an upper envelope ${\cal U}_{{\cal F}}(y)$ of ${\cal F}$, there exist
an integer sequence $U_{\cal F} = \langle u_1, \ldots, u_k \rangle$ and 
subintervals $I_1, \ldots, I_k$ of $Y$
such that 
${\cal U}_{{\cal F}}(y) = \langle (f_{u_1}(y), I_1), \ldots, (f_{u_k}(y), I_k) \rangle$ holds.
That is, an upper envelope ${\cal U}_{{\cal F}}(y)$ can be represented as a piecewise polynomial function.
We call the above sequence $U_{\cal F}$ 
the {\it upper-envelope sequence} of ${\cal U}_{\cal F}(y)$. 

In our algorithm, we compute the upper/lower envelopes of partially defined, univariate polynomial functions. 
The following result is useful for this operation. 
\begin{theorem}[\cite{hart1986nonlinearity,Hershberger89,agarwal1989sharp}]\label{thm:envelope_theorem}
Let ${\cal F}$ be a family of $n$ partially defined, polynomial functions of degree at most two. 
Then, ${\cal U}_{{\cal F}}$ and ${\cal L}_{{\cal F}}$ consist of $O( n2^\alpha(n) ) $ pieces
and one can obtain them in time $O(n \alpha(n) \log n )$, where $\alpha(n)$ is the inverse Ackermann function. 
Moreover, if ${\cal F}$ a family of $n$ line segments, 
then ${\cal U}_{{\cal F}}$ and ${\cal L}_{{\cal F}}$ consist of $O(n)$ pieces
and one can obtain them in time $O(n \log n )$. 
\end{theorem}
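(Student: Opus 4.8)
The plan is to reduce the theorem to the theory of Davenport--Schinzel sequences. Given $\mathcal{F}=\{f_1,\ldots,f_n\}$, consider the upper-envelope sequence $U_{\mathcal{F}}=\langle u_1,\ldots,u_k\rangle$ recalled above, which records, as $y$ increases across $Y$, the index of the function attaining the maximum. The first step is to bound $k$ by relating $U_{\mathcal{F}}$ to a Davenport--Schinzel sequence. Let $\lambda_s(n)$ denote the maximum length of a sequence over $n$ symbols having no two equal adjacent terms and no alternating subsequence $a\,b\,a\,b\,\cdots$ of length $s+2$. The key geometric input is that any two distinct polynomials of degree at most two meet in at most two points; hence on the portion of $Y$ where both are defined, the graphs of $f_i$ and $f_j$ cross at most twice, so $f_i$ and $f_j$ can alternate on the upper envelope only a bounded number of times. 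Accounting additionally for the two endpoints of each function's domain of definition (the ``partially defined'' feature), a standard argument shows that $U_{\mathcal{F}}$, after deleting immediate repetitions, is a Davenport--Schinzel sequence of order $s+2=4$. The same argument applies verbatim to $\mathcal{L}_{\mathcal{F}}$.

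The piece count then follows from the sharp bound $\lambda_4(n)=\Theta(n\,2^{\alpha(n)})$ of Hart--Sharir and Agarwal--Sharir--Shor, giving $k=O(n\,2^{\alpha(n)})$. For the special case of line segments the underlying curves are lines meeting pairwise in at most one point, so the associated sequence has order at most $3$ (and order $2$ for totally defined lines, where $\lambda_2(n)=2n-1$), which yields the stated near-linear piece count.

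For the construction I would use divide and conquer. Split $\mathcal{F}$ into two halves of size $n/2$, recursively compute their upper envelopes $\mathcal{U}_1$ and $\mathcal{U}_2$ as piecewise polynomial functions of degree at most two, and then merge them by taking the pointwise maximum. Since each piece of $\mathcal{U}_1$ and each piece of $\mathcal{U}_2$ are degree-at-most-two polynomials crossing at most twice, the merge is a single synchronized sweep over the breakpoints of both envelopes, running in time linear in the total number of pieces involved. The recursion has depth $O(\log n)$, and summing the envelope complexity produced at each level yields the claimed $O(n\,\alpha(n)\log n)$ running time; for line segments this specializes to Hershberger's $O(n\log n)$ envelope algorithm.

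The main obstacle is entirely combinatorial: it is the near-linear but superlinear bound $\lambda_4(n)=\Theta(n\,2^{\alpha(n)})$ on the length of order-$4$ Davenport--Schinzel sequences, whose proof (the nonlinearity of Davenport--Schinzel sequences) is the deep content supplied by the cited works. By contrast, the reduction of the envelope to a Davenport--Schinzel sequence and the divide-and-conquer merge are routine once the ``at most two intersections'' property is in place; the only real subtlety there is to verify that partial definition raises the Davenport--Schinzel order by exactly two rather than introducing uncontrolled alternations.
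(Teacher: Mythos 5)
The paper does not actually prove this theorem: it is imported wholesale from the cited works, and the paper's only in-text remark is that the bounds come from the theory of Davenport--Schinzel sequences. So your proposal has to be judged against that literature. Your combinatorial half is correct and is exactly the standard argument: two distinct polynomials of degree at most two intersect in at most $s=2$ points, partial definition raises the Davenport--Schinzel order to $s+2=4$, and the sharp bound $\lambda_4(n)=\Theta\bigl(n\,2^{\alpha(n)}\bigr)$ of \cite{hart1986nonlinearity,agarwal1989sharp} gives the $O\bigl(n\,2^{\alpha(n)}\bigr)$ piece count.

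The algorithmic half has a genuine gap. With an arbitrary half--half split, the cost of merging the two recursive envelopes is proportional to the complexity of the merged envelope, which in the worst case is $\Theta(\lambda_4(\cdot))$; the recurrence $T(n)=2T(n/2)+O(\lambda_4(n))$ therefore solves to $O(\lambda_4(n)\log n)=O\bigl(n\,2^{\alpha(n)}\log n\bigr)$, not the claimed $O(n\,\alpha(n)\log n)=O(\lambda_3(n)\log n)$. Your step ``summing the envelope complexity produced at each level yields $O(n\,\alpha(n)\log n)$'' is exactly where this breaks: a single level can already cost $\Theta\bigl(n\,2^{\alpha(n)}\bigr)$, which is asymptotically larger than $n\,\alpha(n)$, and breakpoints created at intermediate levels cannot be charged against the final envelope because they may disappear in later merges. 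Lowering the time bound from $O(\lambda_{s+2}(n)\log n)$ to $O(\lambda_{s+1}(n)\log n)$ --- which for $s=2$ is the claimed $O(n\,\alpha(n)\log n)$ and for segments ($s=1$) is $O(\lambda_2(n)\log n)=O(n\log n)$ --- is precisely the nontrivial content of Hershberger's paper \cite{Hershberger89}: one must sort the arcs by the left endpoints of their domains and run the divide and conquer on that order, with an analysis showing the merges then only generate order-$(s+1)$ alternations. In particular, naive divide and conquer does not ``specialize to Hershberger's algorithm'' for segments; it gives $O(n\,\alpha(n)\log n)$ there, not $O(n\log n)$. A smaller, related mismatch: your order-$3$ argument for segments yields $O(n\,\alpha(n))$ pieces, which is tight for arbitrary segments (Wiernik--Sharir), so the $O(n)$ piece count as literally stated holds for totally defined lines but not for segments; that imprecision sits in the paper's statement itself, but your sketch silently glosses over it with ``near-linear'' rather than deriving the stated bound.
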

Note that the number of pieces and the computation time for the upper/lower envelopes 
are involved with the maximum length of Davenport--Schinzel sequences. 
See~\cite{hart1986nonlinearity} for the details. 
For a family ${\cal F}$ of functions, if we say that we {\it obtain} envelopes ${\cal U}_{{\cal F}}(y)$ or ${\cal L}_{{\cal F}}(y)$, 
then we obtain the information of all pieces $(f_{u_i}(y), I_i)$. 

\subsection{Property of the Inverse Ackermann Function}\label{sec:inverse_Ackermann}

The Ackerman function is defined as follows:
\begin{eqnarray*}
A(n, m) = \left\{
 \begin{array}{ll}
  m + 1 & \text{ if } n = 0, \\
  A(n-1, 1) & \text{ if } n > 0, m=0, \\
  A(n-1, A(n, m - 1)) & \text{ otherwise.}
 \end{array}
\right.
\end{eqnarray*}
The inverse Ackermann function $\alpha(n)$ is defined as
\begin{eqnarray*}
\alpha(n) = \min \{ k \in {\mathbb N}_{0} \mid n \leq A(k, k) \}. 
\end{eqnarray*}
We show that the following inequality:
\begin{property}\label{prop:inverse_ackermann}
\begin{eqnarray}\label{eq:inverse_ackermann}
\alpha(n^3) \leq \alpha(n) + 1
\end{eqnarray}
holds for any positive integer $n$ with $n \geq 8$. 
\end{property}
Let us suppose that $n \geq 8$ and $k = \alpha(n) \geq 3$ since $A(2, 2) = 7$ holds. 
The inequality~\eqref{eq:inverse_ackermann} is equivalent to the inequality
\[
n^3 \leq A(k+1, k+1).
\]
By the definition and the monotonicity of the Ackermann function, we have 
\begin{eqnarray*}
A(k+1, k+1) = A\left( k, A(k+1, k) \right) > A\left( k, A(k, k)\right) \geq 2^{A(k, k)} \geq \bigl( A(k, k) \bigr)^3.
\end{eqnarray*}
The last two inequalities are led by the fact that 
\[
A\left( k, m\right) \geq A\left( 3, m\right) = 2^{m+3} - 3 \geq m^3
\]
holds for any $k \geq 3$ and any positive integer $m$. 
Since $n \leq A(k, k)$ holds, 
we have $n^3 \leq \bigl( A(k, k) \bigr)^3 \leq A(k+1, k+1)$. 
Thus, the proof completes.

\section{Algorithms}\label{sec:keyidea}

The main task of the algorithm is to compute the following $O(n)$ values, 
$MR(v)$ for all $v \in V$ and 
$\min\{ MR(x) \mid x \in e\}$ for all $e \in E$. 
Once we compute these values, we immediately obtain the solution of the problem
by choosing the minimum one among them in $O(n)$ time. 

Let us focus on computing 
$\min\{ MR(x) \mid x \in e\}$ for each $e \in E$. 
(Note that we can compute $MR(v)$ for $v \in V$ in a similar manner.)
Recall the definition of the maximum regret for $x$, 
\mbox{$MR(x) = \max\{ R(x,t) \mid t \in T \}$}. 
A main difficulty lies in evaluating $R(x,t)$ over $t \in T$ 
even for a fixed $x$ since we need treat an infinite set $T$. 
Furthermore, we are also required to find an optimal location among an infinite set $e$. 
To tackle with this issue, our key idea is to partition the problem into a polynomial number of subproblems as follows:
We partition interval $T$ into a polynomial number of subintervals $T_1, \ldots, T_{m}$ 
so that 
$R(x,t)$ is represented as a (single) polynomial function in $x$ and $t$ 
on $\{x \in e\} \times T_j$ for each $j = 1, \ldots, m$. 
For each $T_j$, 
we compute  
the maximum regret for $x \in e$ over $T_j$ 
denoted by $G_j(x) =\max \{R(x,t) \mid t \in T_j\}$. 
An explicit form of $G_j(x)$ will be given in Sec.~\ref{sec:prop_Phi}. 
We then obtain $MR(x)$ for $x \in e $ as the upper envelope of functions $G_1(x), \ldots, G_{m}(x)$
and find the minimum value of $MR(x)$ for $x \in e$ by elementary calculation. 

In the rest of the paper, we mainly 
show that
for each $e$ or $v$,
there exists a partition of $T$ with a polynomial number of subintervals
such that the regret $R(x,t)$ is a polynomial function of degree at most two on each subinterval.

\subsection{Key Lemmas}\label{sec:prop_Phi} 

To understand $R(x,t)$, we observe function $\Phi(x,t)$. 
We give some other notations. 
Let $f^{e_i, j}_{{\rm R}} (t, z)$ and $f^{e_i, j}_{{\rm L}} (t, z)$ denote functions obtained by removing terms containing $x$ from formulae~\eqref{eq:ct_right_1} and \eqref{eq:ct_left}.
Formally, for $1 \leq i < j \leq n$, let function $f^{e_i, j}_{{\rm R}} (t, z)$ be defined on $t \in T$ and $z \in (W_{i+1,j-1}(t), W_{i+1,n}(t)]$ as 
\begin{eqnarray}\label{eq:ct_right_sub}
f^{e_i, j}_{{\rm R}} (t, z) = \tau v_j + \frac{z - W_{i+1,j-1}(t) }{C_{i, j}}, 
\end{eqnarray}
and for $1 \leq j < i \leq n$, let function $f^{e_i, j}_{{\rm L}} (t, z)$ be defined on $t \in T$ and $z \in (W_{j+1,i}(t), W_{1,i}(t)]$ as
\begin{eqnarray}\label{eq:ct_left_sub}
f^{e_i, j}_{{\rm L}} (t, z) = - \tau v_j + \frac{z - W_{j+1,i}(t) }{C_{j,i}}.
\end{eqnarray}
In addition, let $F^{e_i}_{\rm L}(t)$ and $F^{e_i}_{\rm R}(t)$ denote univariate functions defined as \begin{eqnarray}\label{eq:F_e}
F^{e_i}_{\rm L}(t) = \int_{0}^{W_{1,i}(t)} f^{e_i}_{{\rm L}} (t, z) dz, \quad
F^{e_i}_{\rm R}(t) =  \int_{0}^{W_{i+1,n}(t)} f^{e_i}_{{\rm R}} (t, z) dz, 
\end{eqnarray}
where $f^{e_i}_{{\rm L}} (t, z)$ and $f^{e_i}_{{\rm R}} (t, z)$ denote functions defined as
\[
f^{e_i}_{{\rm L}} (t, z) = \max_{1 \leq j \leq i} \left\{  f^{e_i, j}_{{\rm L}} (t, z) \right\}, \
f^{e_i}_{{\rm R}} (t, z) = \max_{i + 1 \leq j \leq n } \left\{  f^{e_i, j}_{{\rm R}} (t, z) \right\}.
\]
Recall the definition of the aggregate evacuation time $\Phi(x,t)$ shown in~\eqref{eq:at}. 
We observe that
for $x \in e_i$, 
$\Phi(x,t)$ can be represented as 
\begin{eqnarray}\label{eq:at_with_F}
\Phi( x, t ) & = &  \bigl(W_{1,i}(t) - W_{i+1,n}(t) \bigr) \tau x + \!\! \int_{0}^{W_{1,i}(t)} \!\!\!\!\!\!\!\! f^{e_i}_{{\rm L}} (t, z) dz + \!\! \int_{0}^{W_{i+1,n}(t)} \!\!\!\!\!\!\!\! f^{e_i}_{{\rm R}} (t, z) dz \notag \\
& = & \bigl(W_{1,i}(t) - W_{i+1,n}(t) \bigr) \tau x + F^{e_i}_{\rm L}(t) + F^{e_i}_{\rm R}(t).
\end{eqnarray}
In a similar manner, 
by the definition of \eqref{eq:at_v} and formula~\eqref{eq:F_e}, 
we have \begin{eqnarray}\label{eq:at_v_with_F}
\Phi( v_i, t ) =  \bigl( W_{1,i-1}(t) - W_{i+1,n}(t) \bigr) \tau v_i + F^{e_{i-1}}_{\rm L}(t) + F^{e_i}_{\rm R}(t). 
\end{eqnarray}

Let us focus on function $F^e_{\rm R}(t)$. 
As $t$ increases, while the upper-envelope sequence of $f_{\rm R}^e(t,z)$ w.r.t. $z$ remains the same, 
function $F^e_{\rm R}(t)$ is represented as the same polynomial,
whose degree is at most two by formulae~\eqref{eq:ct_right_sub}, \eqref{eq:ct_left_sub} and \eqref{eq:F_e}.
In other words, 
a breakpoint of $F^e_{\rm R}(t)$ corresponds to the value $t$ such that 
the upper-envelope sequence of $f_{\rm R}^e(t,z)$ w.r.t. $z$ changes. 
We notice that such a change happens only when three functions $f^{e, h}_{{\rm R}} (t, z)$,
$f^{e, i}_{{\rm R}} (t, z)$ and $f^{e, j}_{{\rm R}} (t, z)$ 
intersect each other, which can happen at most once. 
This implies that $F^e_{\rm R}(t)$ consists of $O(n^3)$ breakpoints, that is, 
it is a piecewise polynomial function of degree at most two with $O(n^3)$ pieces.
The following lemma shows that the number of pieces is actually $O(n^2)$. 

\begin{figure}[t]
  \begin{center}
    \includegraphics[width=8cm,pagebox=cropbox,clip]{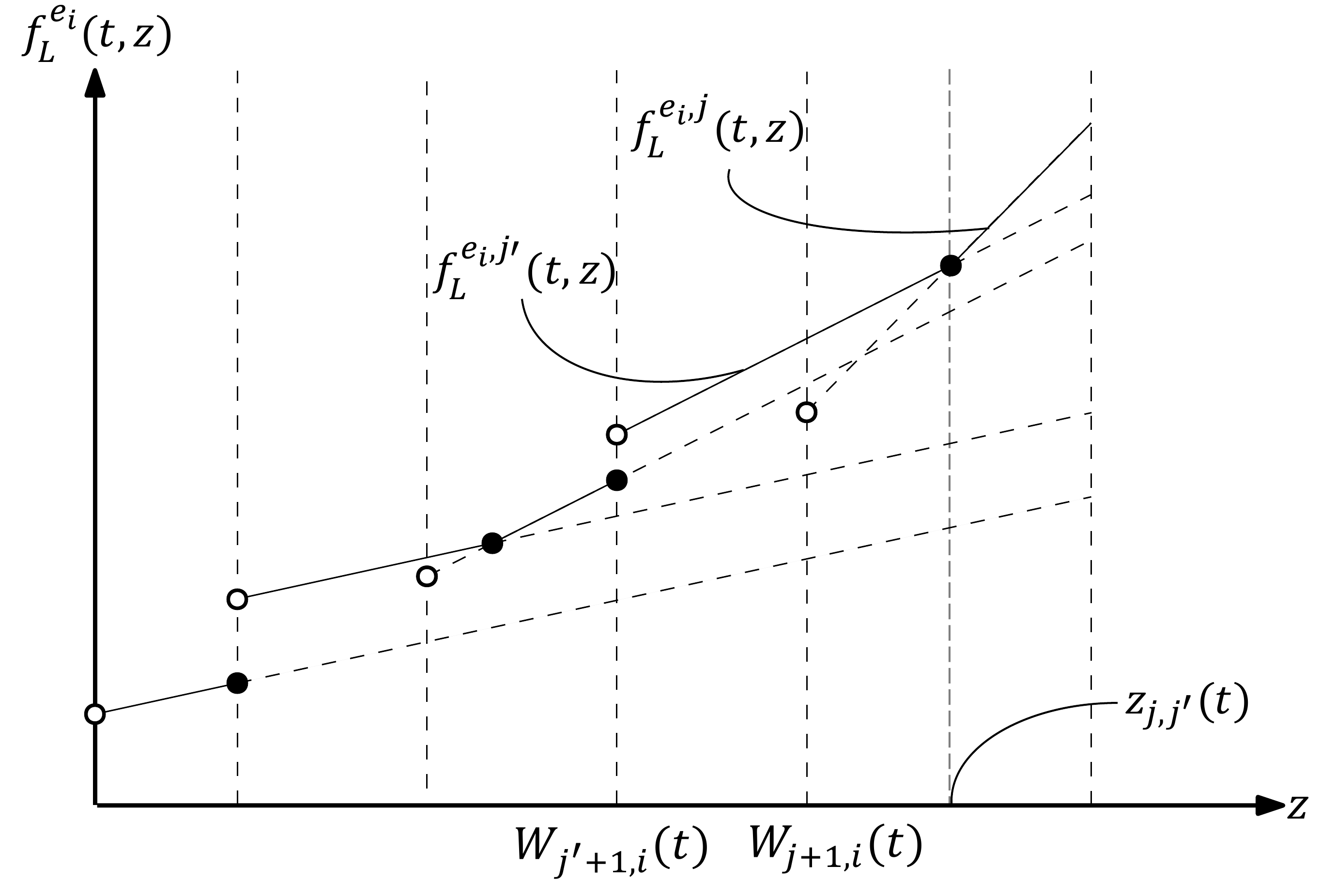}
    \caption{Graph of $f^{e_i}_{{\rm L}} (t, z)$ w.r.t. $z$ with a focus on $f^{e_i,j}_{{\rm L}} (t, z)$, $f^{e_i,j'}_{{\rm L}} (t, z)$ and $z_{j,j'}(t)$ with $j < j'$.}
    \label{fig:lemma7_1}
  \end{center}
\end{figure}

\begin{lemma}\label{lem:func_Phi_merged}
For each $e \in E$, 
$F^{e}_{\rm L}(t)$ and $F^{e}_{\rm R}(t)$ are piecewise polynomial functions 
of degree at most two with $O(n^2)$ pieces, and can be computed in $O(n^3 \log n)$ time.
Especially, when all the edge capacities are uniform, the numbers of pieces of them are $O(n)$,
and can be computed in $O(n^2 \log n)$ time.
\end{lemma}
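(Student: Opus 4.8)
The plan is to prove everything for $F^{e}_{\rm R}(t)$ with $e=e_i$ fixed; the argument for $F^{e}_{\rm L}(t)$ is symmetric via~\eqref{eq:ct_left_sub}, and the two cases of the lemma differ only in how one bounds the complexity of a single upper envelope. Recall from~\eqref{eq:ct_right_sub} and~\eqref{eq:F_e} that $F^{e_i}_{\rm R}(t)=\int_{0}^{W_{i+1,n}(t)} f^{e_i}_{\rm R}(t,z)\,dz$, where $f^{e_i}_{\rm R}(t,z)=\max_{i+1\le j\le n} f^{e_i,j}_{\rm R}(t,z)$ and each $f^{e_i,j}_{\rm R}$ is affine jointly in $(t,z)$ with slope $1/C_{i,j}$ in the variable $z$ (its intercept being linear in $t$, since $W_{i+1,j-1}(t)$ is). The structural fact I would exploit is that $C_{i,j}=\min\{c_h\mid i\le h\le j\}$ is non-increasing in $j$, so the slopes $1/C_{i,j}$ are non-decreasing in $j$. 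Hence for each fixed $t$ the envelope $f^{e_i}_{\rm R}(t,\cdot)$ is convex and piecewise linear in $z$, its pieces occur in non-decreasing order of slope and therefore in non-decreasing order of index $j$; since each line occupies at most one contiguous $z$-range on a convex envelope, it has at most $n-i$ pieces. Consequently $F^{e_i}_{\rm R}(t)$ is a single polynomial of degree at most two on every maximal $t$-interval on which the ordered list of envelope indices, together with the position of the limit $W_{i+1,n}(t)$ among the envelope breakpoints, stays fixed: integrating a function affine in $(t,z)$ over $z$ between consecutive breakpoints, each linear in $t$, yields a quadratic in $t$.

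It then remains to bound the number of such maximal $t$-intervals. Each envelope breakpoint is a pairwise intersection $z_{j,j'}(t)$ of two lines $f^{e_i,j}_{\rm R}$ and $f^{e_i,j'}_{\rm R}$, and solving the defining equation shows $z_{j,j'}(t)$ is itself \emph{linear} in $t$, because the coefficient of $z$ is the constant $1/C_{i,j}-1/C_{i,j'}$ while the remaining terms are linear in $t$. A breakpoint of $F^{e_i}_{\rm R}$ arises in exactly two ways: (a) the ordered list of envelope indices changes, at a triple concurrence $z_{a,k}(t)=z_{k,b}(t)$ where an index $k$ enters or leaves the envelope; or (b) the limit $W_{i+1,n}(t)$ (also linear in $t$) crosses an envelope breakpoint $z_{j,j'}(t)$. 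Type (b) contributes $O(n^2)$ events, since the line $W_{i+1,n}(t)$ meets each of the $O(n^2)$ lines $z_{j,j'}(t)$ at most once. For type (a) I would use slope-monotonicity: membership of index $k$ on the envelope is equivalent to the dual point $(1/C_{i,k},\,\tau v_k-W_{i+1,k-1}(t)/C_{i,k})$ lying on the upper hull of $n$ dual points whose abscissae are fixed and whose ordinates move linearly in $t$, and a Davenport--Schinzel-type argument, in the spirit of Theorem~\ref{thm:envelope_theorem}, bounds the total number of hull changes, and hence the type-(a) events, by $O(n^2)$. Together these give the claimed $O(n^2)$ pieces.

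For the uniform-capacity case all slopes $1/C_{i,j}$ coincide, so no triple concurrence can occur and type (a) degenerates: for each fixed $t$ the winning index at a given $z$ is simply the one maximizing the intercept $\tau v_j-W_{i+1,j-1}(t)/c$ among the indices defined there. The envelope is then governed by the upper envelope of the $n$ lines $t\mapsto \tau v_j-W_{i+1,j-1}(t)/c$, which has $O(n)$ pieces by Theorem~\ref{thm:envelope_theorem}, and the limit together with the domain boundaries $W_{i+1,j-1}(t)$ contribute only $O(n)$ further breakpoints, for $O(n)$ pieces in total. In both cases I would produce the pieces in increasing order of $t$: locating the next breakpoint and computing the ensuing quadratic costs $O(n\log n)$ (recomputing the relevant segment envelope via Theorem~\ref{thm:envelope_theorem}), so the total cost is the number of pieces times $O(n\log n)$, namely $O(n^3\log n)$ in general and $O(n^2\log n)$ in the uniform case, matching the statement.

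The main obstacle is the type-(a) count in the general case: the naive bound on the number of triple concurrences is $O(n^3)$, exactly the figure obtained in the discussion preceding the lemma, and shaving it to $O(n^2)$ is precisely where the monotonicity of $C_{i,j}$ in $j$ must be invoked, by arguing that each index enters or leaves the envelope only $O(n)$ times rather than $O(n^2)$ times. I expect this kinetic-hull estimate to be the delicate step; the degree bound, the type-(b) count, and the uniform-capacity collapse are comparatively routine.
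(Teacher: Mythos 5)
Your degree-two claim and the index-monotonicity of the envelope sequence are sound and agree with the paper (though your justification via convexity of $f^{e_i}_{\rm R}(t,\cdot)$ is not quite right: because each $f^{e_i,j}_{\rm R}$ is only partially defined, the envelope can jump upward at a domain boundary $z=W_{i+1,j-1}(t)$, so it need not be convex; the contiguity of pieces still follows from slope monotonicity alone). The genuine gap is the heart of the lemma: the $O(n^2)$ bound on your type-(a) events is never proved. You defer it to ``a Davenport--Schinzel-type argument, in the spirit of Theorem~\ref{thm:envelope_theorem}'', but that theorem bounds the complexity of a \emph{static} envelope of univariate functions, not the number of combinatorial changes of a hull or envelope evolving with $t$, so it does not apply as invoked. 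Worse, the dual-hull reformulation is not faithful to the object being analyzed: since the domains $(W_{i+1,j-1}(t), W_{i+1,n}(t)]$ themselves move with $t$, membership of index $j$ in the envelope is not equivalent to membership of the dual point $(1/C_{i,j},\,\tau v_j - W_{i+1,j-1}(t)/C_{i,j})$ in the upper hull; in particular an index can enter or leave the envelope at its own domain boundary (when $\tau v_j$ crosses the value of the current envelope at $z=W_{i+1,j-1}(t)$), an event that is neither a triple concurrence (your (a)) nor a crossing with the integration limit (your (b)). So your event classification is incomplete and your central count is an unproven expectation, which you yourself flag as ``the delicate step''.

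The paper closes exactly this gap by a charging argument that your proposal lacks. For each pair $j<j'$ it defines a univariate piecewise-linear breakpoint function $z_{j,j'}(t)$ that is pre-capped by the domain boundaries, as in \eqref{eq:z(t)_uni} and \eqref{eq:z(t)_gen}, and then shows that \emph{every} event generating a breakpoint of $F^{e_i}_{\rm L}$ --- a change of the envelope sequence, an appearance or disappearance of an index (including at domain boundaries), or a change of formula of a current breakpoint --- occurs at a breakpoint of one of the $n$ univariate upper envelopes $z_j(t)=\max\{z_{j,j'}(t)\mid j+1\le j'\le i\}$ of \eqref{eq:z_j}. Each $z_j(t)$ is an upper envelope of $O(n)$ line segments, hence has $O(n)$ breakpoints by the segment case of Theorem~\ref{thm:envelope_theorem}, giving $O(n^2)$ in total; in the uniform case \eqref{eq:z(t)_uni_2} shows each $z_j(t)$ has at most two breakpoints, giving $O(n)$. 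This device also repairs your algorithmic step, which has the same flaw as your counting: you assert one can ``locate the next breakpoint'' in $O(n\log n)$ time, but with no precomputed candidate events this is unsupported. The paper instead computes all $n$ envelopes $z_j(t)$ up front in $O(n^2\log n)$ time and then spends $O(n\log n)$ per piece to recover each quadratic, which is what actually yields the claimed $O(n^3\log n)$ and $O(n^2\log n)$ running times.
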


\begin{proof}
Let us suppose that $x$ is on $e_i \in E$. 
We will show that $F^{e_i}_{\rm L}(t)$ is a piecewise polynomial functions of degree at most two with $O(n^2)$ pieces, and when all the edge capacities are uniform, the numbers of pieces of them are $O(n)$. 
For $F^{e_i}_{\rm R}(t)$, we can see the same properties in a symmetric manner. 

To show the above statement, we give the following properties of $F^{e_i}_{\rm L}(t)$: 
\begin{itemize}
\item[(i)] The number of pieces of $F^{e_i}_{\rm L}(t)$ is $O(n^2)$. 
Especially, when all the edge capacities are uniform, it is $O(n)$.
\item[(ii)] Function of each piece of $F^{e_i}_{\rm L}(t)$ is a quadratic function in $t$. 
\end{itemize}

First, we show that property~{(i)} holds. 
Recall that
$f^{e_i}_{{\rm L}} (t, z)$ is 
the upper envelope of $\left\{f^{e_i, j}_{{\rm L}} (t, z) \mathrel{}\middle|\mathrel{} 1 \leq j \leq i \right\}$ w.r.t. 
$z \in (0, W_{1,i}(t)]$.
For integers $j,j'$ with $1 \leq j < j' \leq i$,
let $z_{j,j'}(t)$ be a function in $t$ defined as follows:
If $C_{j,i}=C_{j',i}$,
\begin{eqnarray}\label{eq:z(t)_uni}
z_{j,j'}(t) = \left\{ 
\begin{array}{ll}
W_{j+1,i}(t) & \text{ if } f^{e_i, j'}_{{\rm L}} (t, W_{j+1,i}(t)) \le f^{e_i, j}_{{\rm L}} (t, W_{j+1,i}(t)), \\
W_{1,i}(t) & \text{ if } f^{e_i, j'}_{{\rm L}} (t, W_{j+1,i}(t)) > f^{e_i, j}_{{\rm L}} (t, W_{j+1,i}(t)), \\
\end{array}
\right.
\end{eqnarray}
otherwise, i.e., if $C_{j,i}<C_{j',i}$,
\begin{eqnarray}\label{eq:z(t)_gen}
z_{j,j'}(t) = \left\{ 
\begin{array}{ll}
W_{j+1,i}(t) & \text{ if } f^{e_i, j'}_{{\rm L}} (t, W_{j+1,i}(t)) \le f^{e_i, j}_{{\rm L}} (t, W_{j+1,i}(t)), \\
\min\{z^*_{j,j'}(t),W_{1,i}(t)\} & \text{ if } f^{e_i, j'}_{{\rm L}} (t, W_{j+1,i}(t)) > f^{e_i, j}_{{\rm L}} (t, W_{j+1,i}(t)), \\
\end{array}
\right. \notag \\
\end{eqnarray}
where $z^*_{j,j'}(t)$ is the solution for $z$ of equation $f^{e_i, j}_{{\rm L}} (t, z) = f^{e_i, j'}_{{\rm L}} (t, z)$ (See Figure~\ref{fig:lemma7_1}).
Note that
inequality conditions in \eqref{eq:z(t)_uni} and \eqref{eq:z(t)_gen}
can be solved for $t$ as [$t \le t_{j,j'}$; $t > t_{j,j'}$] or [$t \ge t_{j,j'}$; $t < t_{j,j'}$],
where $t_{j,j'}$ is the solution for $t$ of equation $f^{e_i, j'}_{{\rm L}} (t, W_{j+1,i}(t)) = f^{e_i, j}_{{\rm L}} (t, W_{j+1,i}(t))$.

For any $t \in T$, let $U(t)$ be the upper-envelope sequence of $f^{e_i}_{{\rm L}}(t, z)$ in $z$.
Now, for some $t'$, suppose
$U(t') = \langle u_1, \ldots, u_k \rangle$.
Note that $u_1 > \ldots > u_k$ holds.
Then, 
the $h$-th smallest breakpoint of $f^{e_i}_{{\rm L}}(t', z)$ for $z \in (0, W_i(t'))$ is $z_{u_{h+1},u_h}(t')$.
We notice that
if $t$ changes from $t'$ in condition that 
$U(t)$ and formula of each $z_{u_{h+1},u_h}(t)$ remain the same,  
formula of $F^{e_i}_{\rm L}(t)$ also remains the same.
In other words, a breakpoint of $F^{e_i}_{\rm L}(t)$ corresponds to when $U(t)$ or formula of some $z_{u_{h+1},u_h}(t)$ changes.
Consider the following three cases.

\noindent
[Case 1]: 
When $t$ comes to some value $t_1$, formula of some $z_{u_{h+1},u_h}(t)$ changes.

\noindent
[Case 2]:
Just after $t$ comes to some value $t_2$,
i.e., when $t=t_2+\epsilon$, $f^{e_i, j}_{{\rm L}} (t_2+\epsilon, z)$ of some $j$ appears as a part of $f^{e_i}_{{\rm L}} (t_2+\epsilon, z)$
so that $U(t_2+\epsilon) = \langle u_1, \ldots, u_h, j, u_{h+1}, \ldots, u_k \rangle$.

\noindent
[Case 3]:
When $t$ comes to some value $t_3$, $f^{e_i, u_h}_{{\rm L}} (t_3, z)$ disappears from $f^{e_i}_{{\rm L}} (t_3, z)$,
i.e., $U(t_3) = \langle u_1, \ldots, u_{h-1}, u_{h+1}, \ldots, u_k \rangle$.

We observe that each of the above $t_1,t_2$ and $t_3$ is a breakpoint of
\begin{eqnarray}\label{eq:z_u_h+1}
z_{u_{h+1}}(t) := \max_{j'} \left\{ z_{u_{h+1},j'}(t) \mid u_{h+1}+1 \le j' \le i \right\}.
\end{eqnarray}
Therefore, the number of breakpoints of $F^{e_i}_{\rm L}(t)$ is at most the number of all the breakpoints of
\begin{eqnarray}\label{eq:z_j}
z_j(t) := \max_{j'} \left\{ z_{j,j'}(t) \mid j+1 \le j' \le i \right\} \quad \forall j \ \ \text{with} \ \ 1 \le j \le i-1,
\end{eqnarray}
which is $O(n^2)$ since the number of breakpoints of $z_j(t)$ for any $j$ is $O(n)$.
Especially, when all the edge capacities are uniform, according to \eqref{eq:z(t)_uni} and \eqref{eq:z_j},
we have
\begin{eqnarray}\label{eq:z(t)_uni_2}
z_{j}(t) = \left\{ 
\begin{array}{ll}
W_{j+1,i}(t) & \text{ if } f^{e_i, j'}_{{\rm L}} (t, W_{j+1,i}(t)) \le f^{e_i, j}_{{\rm L}} (t, W_{j+1,i}(t)) \\
 & \qquad \qquad \qquad \qquad \qquad \qquad \forall j' \text{ with } j+1 \le j' \le i, \\
W_{1,i}(t) & \text{ otherwise}. \\
\end{array}
\right.
\end{eqnarray}
Thus, the number of breakpoints of $z_j(t)$ for any $j$ is at most two,
which means that the number of breakpoints of $F^{e_i}_{\rm L}(t)$ is $O(n)$.
This completes the proof for property~{(i)}. 

We next show that property~{(ii)} holds. 
Let $(H(t),T')$ be a piece of $F^{e_i}_{\rm L}(t)$
such that for $t \in T'$, $U(t) = \langle u_1, \ldots, u_k \rangle$ and 
formula of each $z_{u_{h+1},u_h}(t)$ remains the same.
We then have
\begin{eqnarray}\label{eq:H(t)}
H(t) = \sum_{h=1}^{k} \int_{z_{u_h,u_{h-1}}(t)}^{z_{u_{h+1},u_h}(t)} f^{e_i,u_h}_{{\rm L}} (t, z) dz,
\end{eqnarray}
where $z_{u_1,u_0}(t) = 0$ and $z_{u_{k+1},u_{k}}(t) = W_{1,i}(t)$.
By formula~\eqref{eq:ct_left_sub}, it holds for $h$ with $1 \le h \le k$ that
\begin{eqnarray}\label{eq:H(t)_2}
f^{e_i,u_h}_{{\rm L}} (t, z) = \rho^h_1 z + \rho^h_2 t + \rho^h_3,
\end{eqnarray}
where $\rho^h_1, \rho^h_2, \rho^h_3$ are some constants.
Substituting \eqref{eq:H(t)_2} into \eqref{eq:H(t)}, we obtain
\begin{eqnarray}\label{eq:H(t)_3}
H(t) = \sum_{h=1}^{k} \left\{\frac{\rho^h_1}{2} \left(z_{u_{h+1},u_h}(t)^2-z_{u_h,u_{h-1}}(t)^2 \right)+(\rho^h_2t+\rho^h_3) \left(z_{u_{h+1},u_h}(t)-z_{u_h,u_{h-1}}(t) \right)\right\}, 
\end{eqnarray}
which is quadratic in $t$ since $z_{u_{h+1},u_h}(t)$ and $z_{u_h,u_{h-1}}(t)$ are linear in $t$.
This completes the proof for property~{(ii)}. 

The rest of the proof is to give an algorithm for obtaining 
$F^{e_i}_{\rm L}(t) = \langle ( F^{e_i}_{{\rm L}, h}(t), T_{{\rm L}, h} ) \rangle$. 
Let $m$ denote the number of pieces of $F^{e_i}_{\rm L}(t)$. 
Note that one can obtain $F^{e_i}_{\rm R}(t)$ in a similar manner. 
Our algorithm consists of the following two steps:

\noindent
{\bf Step.~1: Obtain breakpoints of $F^{e_i}_{\rm L}(t)$.} 
For each $j$ with $1 \leq j \leq i$, 
we find all the breakpoints of an upper envelope $z_{j}(t)$. 
Since $z_{j}(t)$ consists of partially defined linear functions, 
we can apply Theorem~\ref{thm:envelope_theorem}. 
Therefore, this operation requires $O(n \log n)$ time for each $j$. 
Thus, Step~1 requires $O(n^2 \log n)$ time.

\noindent
{\bf Step.~2: Obtain polynomial functions of all the pieces of $F^{e_i}_{\rm L}(t)$.}
Let $T'$ be an interval of a piece of $F^{e_i}_{\rm L}(t)$. 
Picking up a parameter $t \in T'$, we obtain 
an upper-envelope sequence $U(t) = \langle u_1, \ldots, u_k \rangle$ 
in $O(n \log n)$ time by Theorem~\ref{thm:envelope_theorem}. 
We obtain polynomial of degree at most two in $O(n)$ time 
by evaluating formula~\eqref{eq:H(t)_3}. 
Since the number of pieces of $F^{e_i}_{\rm L}(t)$ is at most $m$, 
Step~2 requires $O(n m \log n)$ time. 

In total, our algorithm requires 
$O(n (n + m) \log n)$ time, 
which completes the proof of our main theorem 
because $m = O(n^2)$, and for the case with uniform edge capacity, $m = O(n)$.
\end{proof}

Let $N_{F}$ denote the maximum number of pieces of $F^{e}_{\rm L}(t)$ and $F^{e}_{\rm R}(t)$ over $e \in E$. 
Then we have $N_{F} = O(n^2)$, and for the case with uniform edge capacity, $N_{F} = O(n)$. 
Next, we consider ${\rm Opt}(t) = \min\{ \Phi(x,t) \mid x \in V\}$, which is the lower envelope of a family of $n$ functions $\Phi(v_i, t)$ in $t$.
Theorem~\ref{thm:envelope_theorem} and Lemma~\ref{lem:func_Phi_merged}
imply the following lemma. 

\begin{lemma}\label{lem:opt} 
${\rm Opt}(t)$ is a piecewise polynomial function
of degree at most two with $O(n N_{F}2^{\alpha(n)})$ pieces,
and can be obtained in $O(n N_{F} \alpha(n) \log n)$ time
if functions $F^{e}_{\rm L}(t)$ and $F^{e}_{\rm R}(t)$ for all $e \in E$ are available.
\end{lemma}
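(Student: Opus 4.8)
The plan is to express each candidate $\Phi(v_i,t)$ in explicit piecewise form, take the lower envelope of the $n$ resulting functions via Theorem~\ref{thm:envelope_theorem}, and then absorb the envelope-size and running-time bounds into the stated $\alpha(n)$ and $\log n$ factors by means of Property~\ref{prop:inverse_ackermann}.

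First I would fix a vertex $v_i$ and invoke formula~\eqref{eq:at_v_with_F}, which writes $\Phi(v_i,t)$ as the sum of the term $\bigl(W_{1,i-1}(t)-W_{i+1,n}(t)\bigr)\tau v_i$ and the two functions $F^{e_{i-1}}_{\rm L}(t)$ and $F^{e_i}_{\rm R}(t)$. Since $\tau v_i$ is a constant and each $W_{\cdot,\cdot}(t)$ is linear in $t$, the first term is a single linear piece, while the two remaining summands are, by Lemma~\ref{lem:func_Phi_merged}, piecewise polynomial functions of degree at most two with $O(N_F)$ pieces each (and available by assumption). Applying Proposition~\ref{pro:sum_piecewise} twice then shows that $\Phi(v_i,t)$ is a piecewise polynomial function of degree at most two with $O(N_F)$ pieces, obtainable in $O(N_F)$ time. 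Doing this for every $i$ produces the $n$ functions $\Phi(v_1,t),\dots,\Phi(v_n,t)$ in $O(nN_F)$ total time.

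Next I would collect all pieces of all these functions into a single family of partially defined polynomial functions of degree at most two; this family has $M=O(nN_F)$ members. Because for every $t\in T$ the value $\min_{x\in V}\Phi(x,t)$ equals the minimum over exactly those pieces defined at $t$, the lower envelope of this family coincides with ${\rm Opt}(t)$. Theorem~\ref{thm:envelope_theorem} applied to the $M$ functions then yields a lower envelope with $O\bigl(M\,2^{\alpha(M)}\bigr)$ pieces, computable in $O\bigl(M\,\alpha(M)\log M\bigr)$ time, and each piece is quadratic or linear, so ${\rm Opt}(t)$ has the claimed form and degree.

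It remains to rewrite these bounds in terms of $n$. Since $N_F=O(n^2)$ in all cases, we have $M=O(n^3)$, whence $\log M=O(\log n)$; and using the monotonicity of $\alpha$ together with Property~\ref{prop:inverse_ackermann} (the inequality $\alpha(n^3)\le\alpha(n)+1$), one gets $\alpha(M)=\alpha(n)+O(1)$ and hence $2^{\alpha(M)}=O(2^{\alpha(n)})$. Substituting these in gives $O(nN_F\,2^{\alpha(n)})$ pieces and $O(nN_F\,\alpha(n)\log n)$ time, the $O(nN_F)$ assembly cost being dominated. I expect the one genuinely delicate step to be this last estimate: one must check that replacing $\alpha(M)$ and $\log M$ by $\alpha(n)$ and $\log n$ costs only constant factors, i.e.\ that the constant hidden in $M=O(n^3)$ does not push $\alpha(M)$ past $\alpha(n)+O(1)$, which is precisely what Property~\ref{prop:inverse_ackermann} (applied after bounding $M\le(cn)^3$ for a suitable constant $c$) is designed to guarantee.
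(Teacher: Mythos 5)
Your proposal is correct and follows essentially the same route as the paper's proof: express each $\Phi(v_i,t)$ via formula~\eqref{eq:at_v_with_F} and Lemma~\ref{lem:func_Phi_merged}, view ${\rm Opt}(t)$ as the lower envelope of the resulting $O(nN_F)$ partially defined polynomial pieces of degree at most two, apply Theorem~\ref{thm:envelope_theorem}, and then reduce $\alpha(nN_F)$ and $\log(nN_F)$ to $\alpha(n)+O(1)$ and $O(\log n)$ via Property~\ref{prop:inverse_ackermann}. Your write-up is in fact slightly more careful than the paper's at two points --- explicitly invoking Proposition~\ref{pro:sum_piecewise} to assemble each $\Phi(v_i,t)$, and checking that the constant hidden in $M=O(n^3)$ does not spoil the bound $\alpha(M)\le\alpha(n)+O(1)$ --- but these are refinements of the same argument, not a different one.
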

\begin{proof}
Formula~\eqref{eq:opt} implies that ${\rm Opt}(t)$ is the lower envelope of 
a family of $n$ functions \mbox{$\{\Phi(v,t) \mid v \in V \}$}.
Recall that for $v_i \in V$, we have
\begin{eqnarray*}
\Phi(v_i, t) =  \bigl( W_{1,i-1}(t) - W_{i+1,n}(t) \bigr) \tau v_i + F^{e_{i-1}}_{\rm L}(t) + F^{e_i}_{\rm R}(t). 
\end{eqnarray*}
By Lemma~\ref{lem:func_Phi_merged}, since $F^{e}_{\rm L}(t)$ and $F^{e}_{\rm R}(t)$ for any $e \in E$ is a piecewise polynomial function  
of degree at most two with at most $N_{F}$ pieces, 
${\rm Opt}(t)$ is the lower envelope of 
$O(n N_{F})$ partially defined polynomial functions of degree at most two. 

Theorem~\ref{thm:envelope_theorem} implies that 
${\rm Opt}(t)$ is a piecewise polynomial function of degree at most two 
with at most $O(n N_{F} 2^{\alpha(n N_{F})}) = O(n N_{F} 2^{\alpha(n)})$ pieces and 
can be obtained in $O(n N_{F} \alpha(n N_{F}) \log (n N_{F}))) = O(n N_{F} \alpha(n) \log n)$ time, 
where we used the facts that $N_{F} = O(n^3)$ and 
Property~\ref{prop:inverse_ackermann}.
It completes the proof. 
\end{proof}

Let $N_{\rm Opt}$ denote the number of pieces of ${\rm Opt}(t)$. 
Then we have $N_{\rm Opt} = O(n N_{F}2^{\alpha(n)})$.

Let us consider $R(x, t)$ in the case that sink $x$ is on an edge $e_i \in E$.
Substituting formula~\eqref{eq:at_with_F} for~\eqref{def:regret}, we have 
\[
R(x, t) = \Phi(x,t) - {\rm Opt}(t) = \bigl( W_{1,i}(t) - W_{i+1,n}(t) \bigr) \tau x 
+ F^{e_i}_{\rm L}(t)+ F^{e_i}_{\rm R}(t) - {\rm Opt}(t). 
\]
By Proposition~\ref{pro:sum_piecewise}, 
$F^{e_i}_{\rm L}(t)+ F^{e_i}_{\rm R}(t) - {\rm Opt}(t)$
is a piecewise polynomial function of degree at most two
with at most $2 N_{F} + N_{\rm Opt} = O(N_{\rm Opt})$ pieces.
Let $N_{e_i}$ be the number of pieces of 
$F^{e_i}_{\rm L}(t)+ F^{e_i}_{\rm R}(t) - {\rm Opt}(t)$ and 
$T^{e_i}_j$ be the interval of the $j$-th piece (from the left) of 
$F^{e_i}_{\rm L}(t)+ F^{e_i}_{\rm R}(t) - {\rm Opt}(t)$. 
Thus, $R(x,t)$ is represented as a (single) polynomial function in $x$ and $t$ on $\{x \in e\} \times T_j$ for each $T_j$. 
For each integer $j$ with  $1 \leq j \leq N_{e_i}$, let $G^{e_i}_{j}(x)$ be a function defined as
\begin{eqnarray}
G^{e_i}_{j}(x) = \max \{ R(x, t) \mid t \in T^{e_i}_j\}.
\end{eqnarray}
We then have the following lemma. 
\begin{lemma}\label{lem:G^e_function} 
For each $e_i \in E$ and $j$ with $1 \le j \le N_{e_i}$, $G^{e_i}_{j}(x)$ is a piecewise polynomial function of degree at most two with at most three pieces, and can be obtained in constant time
if functions $F^{e_i}_{\rm L}(t)$, $F^{e_i}_{\rm R}(t)$ and ${\rm Opt}(t)$ are available.
\end{lemma}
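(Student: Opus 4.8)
The plan is to fix the edge $e_i$ and the index $j$ and to view $R(x,t)$ as a quadratic function of the single variable $t$ on the interval $T^{e_i}_j$, parametrized by $x$. By construction, $T^{e_i}_j$ is a piece on which $F^{e_i}_{\rm L}(t)+F^{e_i}_{\rm R}(t)-{\rm Opt}(t)$ coincides with a single polynomial of degree at most two in $t$; and since every weight function is linear, $W_{1,i}(t)-W_{i+1,n}(t)$ is linear in $t$. Restricting
\[
R(x,t)=\bigl(W_{1,i}(t)-W_{i+1,n}(t)\bigr)\tau x + F^{e_i}_{\rm L}(t)+F^{e_i}_{\rm R}(t)-{\rm Opt}(t)
\]
to $t\in T^{e_i}_j$, I would write $R(x,t)=c\,t^{2}+(d+a\tau x)\,t+(e+b\tau x)$, where $a,b,c,d,e$ are constants read off in $O(1)$ time from the available pieces of $F^{e_i}_{\rm L}$, $F^{e_i}_{\rm R}$, ${\rm Opt}$ and from the precomputed linear coefficients of $W_{1,i}-W_{i+1,n}$. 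Thus, for every fixed $x$, the map $t\mapsto R(x,t)$ is a parabola whose leading coefficient $c$ is independent of $x$, and $G^{e_i}_{j}(x)$ is its maximum over the closed interval $T^{e_i}_j=[t_1,t_2]$.

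The core of the argument is then a case analysis on the sign of $c$. If $c\ge 0$, the parabola is convex (or a line), so its maximum over $[t_1,t_2]$ is attained at an endpoint; since $R(x,t_1)$ and $R(x,t_2)$ are each affine in $x$, the function $G^{e_i}_{j}(x)=\max\{R(x,t_1),R(x,t_2)\}$ is the upper envelope of two lines in $x$, hence piecewise linear with at most two pieces. If $c<0$, the parabola is concave and its unconstrained maximizer is the vertex $t^{*}(x)=-(d+a\tau x)/(2c)$, which is affine in $x$. I would project $t^{*}(x)$ onto $[t_1,t_2]$: when $t^{*}(x)<t_1$ the restricted maximum is $R(x,t_1)$ (affine in $x$); when $t^{*}(x)>t_2$ it is $R(x,t_2)$ (affine in $x$); and when $t_1\le t^{*}(x)\le t_2$ it is the vertex value $(e+b\tau x)-(d+a\tau x)^2/(4c)$, which is quadratic in $x$.

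Because $t^{*}(x)$ is affine in $x$, the three conditions $t^{*}(x)<t_1$, $t_1\le t^{*}(x)\le t_2$, and $t^{*}(x)>t_2$ partition the edge $e_i$ into at most three subintervals, on each of which $G^{e_i}_{j}(x)$ is a single polynomial of degree at most two; this yields at most three pieces, as claimed. All the relevant quantities—the coefficients $a,b,c,d,e$, the endpoints $t_1,t_2$, and the at most two switch points in $x$—are obtained by elementary algebra from data already available, so the whole construction runs in constant time.

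I expect the only delicate point to be the concave case ($c<0$): I must verify that, as $x$ sweeps $e_i$, the argmax genuinely transitions from the endpoint $t_1$ to the interior vertex and then to $t_2$ (and collapses to fewer pieces when $a=0$, i.e.\ when the vertex does not move with $x$), so that the worst case is exactly three pieces and the pieces join continuously at the switch points. Handling closed versus half-open pieces is harmless, since $R(x,t)$ is continuous in $t$ and the supremum over $T^{e_i}_j$ equals the maximum over its closure $[t_1,t_2]$.
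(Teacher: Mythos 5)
Your proposal is correct and follows essentially the same approach as the paper's proof: both express $R(x,t)$ on $\{x\in e_i\}\times T^{e_i}_j$ as a quadratic in $t$ whose leading coefficient is independent of $x$, split on the sign of that coefficient, and in the concave case use the fact that the vertex location is affine in $x$ to partition $e_i$ into at most three subintervals (endpoint, vertex, endpoint), each carrying a polynomial of degree at most two. Your additional remarks on the degenerate case and continuity at the switch points are harmless refinements of the same argument.
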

\begin{proof}
Recall that we have 
\[
R(x, t) = \Phi(x,t) - {\rm Opt}(t) = \bigl( W_{1,i}(t) - W_{i+1,n}(t) \bigr) \tau x 
+ F^{e_i}_{\rm L}(t)+ F^{e_i}_{\rm R}(t) - {\rm Opt}(t). 
\]
Because $W_{1,i}(t)$, $W_{i+1,n}(t)$ are linear in $t$ and
$F^{e_i}_{\rm L}(t)+ F^{e_i}_{\rm R}(t) - {\rm Opt}(t)$ are 
polynomial function of degree at most two on $t \in T_j$,
we can represent $R(x, t)$ on $\{x \in e_i\} \times T_j$ 
with five real constants $\beta_k$ ($k = 1, \ldots, 5$) as
\[
R(x, t) = \beta_1 t^2 + \beta_2 x t + \beta_3 t + \beta_4 x + \beta_5.
\]

Let us obverse the explicit form of the maximum value $G_{j}(x)$ of $R(x,t)$ for $x \in e_i$ over $t \in T_j$. 
Let $T_j = [t_{j-1}, t_{j}]$. 
If $\beta_1 \geq 0$, 
$R(x, t)$ takes the maximum value when $t = t_{j-1}$ or $t=t_{j}$ for any $x$. 
Thus, we have $G_{j}(x) = \max\{ R(x, t_{j-1}), R(x, t_{j})\}$ that is 
a piecewise linear function in $x$ with at most two pieces, 
since both $R(x, t_{j-1})$ and $R(x, t_{j})$ are linear in $x$. 
Let us consider the other case that $\beta_1 < 0$ holds. 
When the axis of symmetry of $R(x,t)$ w.r.t. $t$, 
i.e., $t = -(\beta_2x +\beta_3)/(2\beta_1)$, is contained in $T_j$,
it holds $G_{j}(x) = R(x, -(\beta_2x +\beta_3)/(2\beta_1))$.
We thus have
\begin{eqnarray}\label{eq:mrj}
G_{j}(x) = \left\{ 
\begin{array}{ll}
R(x, t_{j-1})& \text{ if } -\frac{\beta_2x + \beta_3}{2\beta_1} < t_{j-1}, \\
R\left( x, -\frac{\beta_2x + \beta_3}{2\beta_1} \right) & \text{ if } t_{j-1} \leq -\frac{\beta_2x + \beta_3}{2\beta_1} \leq t_{j}, \\
R(x, t_{j})& \text{ if } -\frac{\beta_2x + \beta_3}{2\beta_1} > t_{j}.
\end{array}
\right.
\end{eqnarray}
Note that
inequality conditions in \eqref{eq:mrj}
can be solved for $x$ as [$x < x_1$; $x_1 \le x \le x_2$; $x > x_2$] or [$x > x_1$; $x_1 \ge x \ge x_2$; $x < x_2$],
where 
$x_1$ is the solution for $x$ of equation $-(\beta_2x + \beta_3)/(2\beta_1) = t_{j-1}$
and
$x_2$ is the solution for $x$ of equation $-(\beta_2x + \beta_3)/(2\beta_1) = t_{j}$ (See Figure~\ref{fig:mrj}).
Therefore, $G_{j}(x)$ is a piecewise polynomial function with at most three polynomial of degree at most two 
because $R\left( x, - \frac{\beta_2x + \beta_3}{2\beta_1} \right)$ is a quadratic function in $x$. 
\end{proof}
\begin{figure}[t]
  \begin{center}
    \includegraphics[width=8cm,pagebox=cropbox,clip]{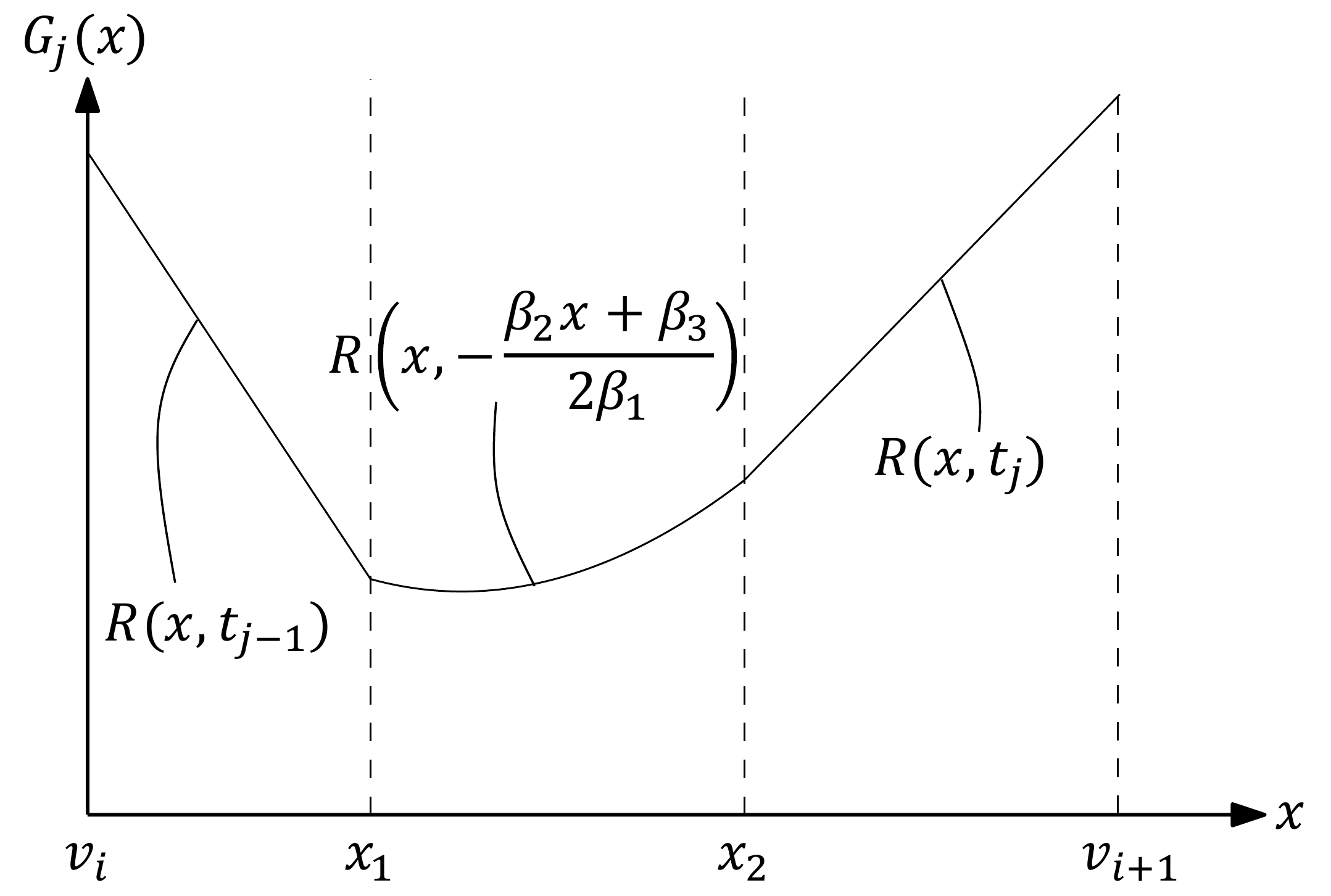}
    \caption{Graph of $G_{j}(x)$ in the case of $v_i < x_1 < x_2 < v_{i+1}$.}
    \label{fig:mrj}
  \end{center}
\end{figure}

Recalling the definition of $MR(x)$, it holds that for $x \in e$,
\[
MR(x) = \max \{R(x, t) \mid t \in T \} = 
\max \{G^{e}_{j}(x) \mid 1 \leq j \leq N_{e} \},
\]
that is, $MR(x)$ is the upper envelope of functions $G^{e}_{1}(x), \ldots, G^{e}_{N_{e}}(x)$. 
Applying Theorem~\ref{thm:envelope_theorem}, we have the following lemma. 
\begin{lemma}\label{lem:MR_e_function} 
For each $e \in E$, there exists an algorithm that finds a location that 
minimizes $MR(x)$ under the restriction with $x \in e$
in $O(N_{\rm Opt} \alpha(n) \log n)$ time
if functions $F^{e}_{\rm L}(t)$, $F^{e}_{\rm R}(t)$ and ${\rm Opt}(t)$ are available.
\end{lemma}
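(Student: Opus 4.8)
The plan is to reduce the computation of $\min\{MR(x) \mid x \in e\}$ to a single upper-envelope computation over the family $\{G^e_j(x) \mid 1 \le j \le N_e\}$ and then a routine minimization. First I would recall from the preceding discussion that $MR(x) = \max\{G^e_j(x) \mid 1 \le j \le N_e\}$ for $x \in e$, so that $MR(x)$ is exactly the upper envelope of these $N_e$ functions. By Lemma~\ref{lem:G^e_function}, each $G^e_j(x)$ is a piecewise polynomial function of degree at most two with at most three pieces, and can be obtained in constant time given $F^e_{\rm L}(t)$, $F^e_{\rm R}(t)$ and ${\rm Opt}(t)$. Hence the whole family, viewed as a collection of individual quadratic pieces, consists of $O(N_e)$ partially defined polynomial functions of degree at most two, and computing all of them takes $O(N_e)$ time.

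Next I would invoke Theorem~\ref{thm:envelope_theorem} to bound the cost of the upper-envelope computation. Since $MR(x)$ is the upper envelope of $O(N_e)$ partially defined functions of degree at most two, the theorem gives an envelope with $O(N_e 2^{\alpha(N_e)})$ pieces, computable in $O(N_e \alpha(N_e) \log N_e)$ time. The key step is then to replace $N_e$ by the global bound $N_{\rm Opt}$: recall that $N_e = O(N_{\rm Opt})$ because $F^{e}_{\rm L}(t)+ F^{e}_{\rm R}(t) - {\rm Opt}(t)$ has at most $2N_F + N_{\rm Opt} = O(N_{\rm Opt})$ pieces, as established just before Lemma~\ref{lem:G^e_function}. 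Substituting and simplifying the $\alpha$ and $\log$ factors via $\alpha(N_e) = O(\alpha(n))$ and $\log N_e = O(\log n)$ (using Property~\ref{prop:inverse_ackermann} together with $N_{\rm Opt} = \mathrm{poly}(n)$) yields the claimed running time $O(N_{\rm Opt} \alpha(n) \log n)$.

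Finally, once the upper envelope representing $MR(x)$ on $e$ is available as an explicit piecewise polynomial function of degree at most two, I would find $\min\{MR(x) \mid x \in e\}$ by elementary calculation: on each piece the function is a univariate quadratic on a closed subinterval, whose minimum is attained either at an endpoint or at the vertex of the parabola, so the minimum over that piece is computable in constant time. Scanning over all $O(N_{\rm Opt} 2^{\alpha(n)})$ pieces and taking the smallest value costs $O(N_{\rm Opt} 2^{\alpha(n)})$ time, which is dominated by the envelope-construction cost $O(N_{\rm Opt} \alpha(n) \log n)$. Collecting the two phases gives the stated bound.

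The main obstacle I anticipate is not the algorithmic structure, which is a clean envelope-then-minimize pipeline, but making the complexity bookkeeping precise: one must carefully justify that the number of constituent pieces fed into Theorem~\ref{thm:envelope_theorem} is $O(N_e) = O(N_{\rm Opt})$ rather than something larger (each $G^e_j$ contributing up to three pieces is harmless, but one should confirm the $N_e$ functions are genuinely defined on a common parameter domain $e$ and overlap consistently), and that the factor reductions $\alpha(N_e) \le \alpha(n)+O(1)$ and $\log N_e = O(\log n)$ are licensed by Property~\ref{prop:inverse_ackermann} and the polynomial size of $N_{\rm Opt}$. Once these bounds are pinned down, the remaining arguments are routine.
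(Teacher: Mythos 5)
Your proposal is correct and follows essentially the same route as the paper's own proof: compute the $G^e_j(x)$ via Lemma~\ref{lem:G^e_function}, take the upper envelope of the resulting $O(N_{\rm Opt})$ quadratic pieces via Theorem~\ref{thm:envelope_theorem}, reduce $\alpha(N_{\rm Opt})$ and $\log N_{\rm Opt}$ to $\alpha(n)$ and $\log n$ using Property~\ref{prop:inverse_ackermann}, and minimize piecewise in constant time per piece. The only difference is presentational bookkeeping (you bound the scan cost by $O(N_{\rm Opt}2^{\alpha(n)})$ explicitly, the paper absorbs it silently), which does not change the argument.
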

\begin{proof}
We give how to find a location $x^{*, e}$ that minimizes $MR(x)$ over $x \in e$. 
Since functions $F^{e}_{\rm L}(t)$, $F^{e}_{\rm R}(t)$ and ${\rm Opt}(t)$ are available, 
we can apply Lemma~\ref{lem:G^e_function} and then 
compute the explicit forms of functions $G^{e}_{j}(x)$ 
for all $j$ with $1 \le j \le N_{e}$ which are obtained in $O(N_{e})=O(N_{\rm Opt})$ time. 
Function $MR(x)$ for $x \in e$ is the upper envelope of functions $G^{e}_{1}(x), \ldots, G^{e}_{N_{e}}(x)$.
Since $G^{e}_{j}(x)$ is a piecewise polynomial function of degree at most two with at most three pieces,
$MR(x)$ is the upper envelope of at most $3N_e = O(N_{\rm Opt})$ functions of degree at most two. 
Theorem~\ref{thm:envelope_theorem} implies that 
$MR(x)$ consists of $O(N_{\rm Opt} 2^{\alpha( N_{\rm Opt} )}) = O(N_{\rm Opt} 2^{\alpha(n)})$ pieces
and can be obtained in $O(N_{\rm Opt} \alpha( N_{\rm Opt} ) \log N_{\rm Opt}) = O(N_{\rm Opt} \alpha(n) \log n)$ time, 
where we used the fact that the inverse Ackermann function satisfies 
that $\alpha(n^3) \leq \alpha(n) + 1$ holds for any positive integer $n$ with $n \geq 8$. 

For each piece, compute a point which minimizes $MR(x)$ in constant time, and among the obtained values, choose the minimum one as $x^{*, e}$. 
Summarizing the above argument, these operations take $O(N_{\rm Opt} \alpha(n) \log n)$ time. 
\end{proof}

Note that the small modification for the algorithm of Lemma~\ref{lem:MR_e_function}
leads that we can also compute $MR(v)$ for all $v \in V$ 
in $O(N_{\rm Opt})$ time.
\begin{lemma}\label{lem:MR_v_function} 
For each $v \in V$, there exists an algorithm that 
computes $MR(v)$ in $O( N_{\rm Opt} )$ time
if functions $F^{e}_{\rm L}(t)$, $F^{e}_{\rm R}(t)$ and ${\rm Opt}(t)$ are available for all $e \in E$.
\end{lemma}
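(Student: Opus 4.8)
The plan is to mirror the argument of Lemma~\ref{lem:MR_e_function} with the simplification that fixing $x$ at a vertex $v$ removes all dependence on the continuous variable $x$, so that no upper envelope in $x$ need be computed. First I would recall that for a vertex $v_i$, formula~\eqref{eq:at_v_with_F} gives
\[
\Phi(v_i,t) = \bigl(W_{1,i-1}(t)-W_{i+1,n}(t)\bigr)\tau v_i + F^{e_{i-1}}_{\rm L}(t) + F^{e_i}_{\rm R}(t),
\]
so that $R(v_i,t) = \Phi(v_i,t) - {\rm Opt}(t)$ is, by Proposition~\ref{pro:sum_piecewise}, a piecewise polynomial function of degree at most two in the single variable $t$. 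The number of pieces is at most $2N_F + N_{\rm Opt} = O(N_{\rm Opt})$, exactly as in the edge case, but now each piece is a genuine univariate quadratic in $t$ because $v_i$ is a fixed constant rather than a free variable.

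The key observation is that $MR(v_i) = \max\{R(v_i,t)\mid t\in T\}$ is simply the maximum of a univariate piecewise quadratic over the interval $T$, rather than the upper envelope of a family of functions of $x$. Concretely, I would partition $T$ according to the $O(N_{\rm Opt})$ breakpoints of $R(v_i,t)$; on each piece $T_j = [t_{j-1},t_j]$ the function $R(v_i,t)$ is a single quadratic $\beta_1 t^2 + \beta_3 t + \beta_5$ (the cross term $\beta_2 x t$ and the term $\beta_4 x$ now being absorbed into constants since $x=v_i$ is fixed). The maximum of this quadratic over $T_j$ is attained either at an endpoint $t_{j-1}, t_j$ or, when $\beta_1<0$ and the vertex $-\beta_3/(2\beta_1)$ of the parabola lies inside $T_j$, at that vertex. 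Each such per-piece maximum is therefore computable in constant time by evaluating $R(v_i,\cdot)$ at these at most three candidate points, and $MR(v_i)$ is the maximum over all pieces.

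Summing the constant-time work over the $O(N_{\rm Opt})$ pieces gives the claimed $O(N_{\rm Opt})$ running time, provided $F^{e}_{\rm L}(t)$, $F^{e}_{\rm R}(t)$ and ${\rm Opt}(t)$ are available so that the piecewise representation of $R(v_i,t)$ can be assembled in linear time via Proposition~\ref{pro:sum_piecewise}. I do not expect a genuine obstacle here: the only subtlety is bookkeeping the endpoints induced by the three constituent functions when forming $R(v_i,t)$, and checking that maximizing a univariate piecewise quadratic requires no envelope computation and hence no $\alpha(n)\log n$ factor. This is precisely why the vertex case is cheaper than the edge case of Lemma~\ref{lem:MR_e_function}, and it is the point I would emphasize in writing up the short proof.
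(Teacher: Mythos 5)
Your proposal is correct and follows essentially the same route as the paper's proof: express $R(v_i,t)$ via formula~\eqref{eq:at_v_with_F}, use Proposition~\ref{pro:sum_piecewise} to get a piecewise quadratic in $t$ with $O(N_{\rm Opt})$ pieces, maximize each univariate quadratic piece in $O(1)$ time, and take the maximum over all pieces. The paper compresses the per-piece maximization into the phrase ``by elementary calculation,'' which you simply make explicit (endpoints or the parabola's vertex), so there is no substantive difference.
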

\begin{proof}
We show that it takes to obtain $MR(v_i)$ in $O(N_{\rm Opt})$ time for each $v_i \in V$. 
Substituting formula~\eqref{eq:at_v_with_F} for~\eqref{def:regret}, we have 
\[
R(v_i, t) = \bigl(W_{i-1}(t) + W_i(t) - W_n(t) \bigr) \tau v_i 
+ F^{e_{i-1}}_{\rm L}(t) + F^{e_i}_{\rm R}(t) - {\rm Opt}(t). 
\]
By Proposition~\ref{pro:sum_piecewise}, 
$F^{e_{i-1}}_{\rm L}(t) + F^{e_i}_{\rm R}(t) - {\rm Opt}(t)$
is a piecewise polynomial function of degree at most two 
with at most $O(N_{\rm Opt})$ pieces. 
Let $N_{v_i}$ be the number of pieces of 
$F^{e_{i-1}}_{\rm L}(t) + F^{e_i}_{\rm R}(t) - {\rm Opt}(t)$ and 
$T^{v_i}_j$ be the interval of the $j$-th piece (from the left) of 
$F^{e_{i-1}}_{\rm L}(t) + F^{e_i}_{\rm R}(t) - {\rm Opt}(t)$. 
Thus, $R(v_i,t)$ is represented as a polynomial function of degree at most two 
on $t \in T^{v_i}_j$ for each $T^{v_i}_j$. 
For each integer $j$ with  $1 \leq j \leq N_{v_i}$, 
by elementary calculation, we obtain the maximum value $G^{v_i}_j$ of $R(v_i, t)$ over $t \in T_j$
in $O(1)$ time. 
By choosing the maximum values among $G^{v_i}_{1}, \ldots, G^{v_i}_{N_{v_i}}$, 
we can obtain $MR(v_i)$ in $O(N_{v_i}) = O(N_{\rm Opt})$ time.
\end{proof}

\subsection{Algorithms and Time Analyses}\label{subsec:algorithm}

Let us give an algorithm that finds a sink location that 
minimizes the maximal regret and 
the analysis of the running time of each step. 

First, we obtain $F^e_{\rm L}(t)$ and $F^e_{\rm R}(t)$ for all $e \in E$ 
and obtain function ${\rm Opt}(t)$ as a preprocess. 
Applying Lemmas~\ref{lem:func_Phi_merged} and \ref{lem:opt}, 
we take $O(n^2 N_{F} \log n)$ time for these operations. 
Next, we compute $x^{*,e} = \argmin\{ MR(x) \mid x \in e\}$ for all $e \in E$
in $O(n N_{\rm Opt} \alpha(n) \log n)$ time 
by applying Lemma~\ref{lem:MR_e_function}. 
Then, we also compute $MR(v)$ for all $v \in V$ in $O(n N_{\rm Opt})$ time
by applying Lemma~\ref{lem:MR_v_function}. 
Finally, we find an optimal sink location $x^{*}$ in $O(n)$ time
by evaluating the values $MR(x)$ for $x \in \{ x^{*, e} \} \cup V$. 

Since we have $N_{Opt} = O(n N_{F} 2^{\alpha(n)})$, 
the bottleneck of our algorithm is to compute $x^{*,e}$ for all $e \in E$.
Thus, we see that the algorithm runs in 
$O(n^2 N_{F}2^{\alpha(n)} \alpha(n) \log n)$
time, which completes the proof of our main theorem 
because $N_{F} = O(n^2)$, and for the case with uniform edge capacity, $N_{F} = O(n)$.



\bibliographystyle{splncs04}
\bibliography{MMR_AT_Ksink_arXiv}

\end{document}